\newcommand{\blind}{1}
\newcolumntype{Y}{>{\raggedleft\arraybackslash}X}
\newcolumntype{Y}{>{\centering\arraybackslash}X}
\newtheorem{proposition}{Proposition}[section]
\newtheorem{theorem}{Theorem}[section]
\newtheorem{lemma}[theorem]{Lemma}
\newtheorem{corollary}{Corollary}
\newtheorem{remark}{Remark}
\newcommand{\h}[1]{\widehat{#1}}
\newcommand{\grad}{{\nabla}}
\newcommand{\real}{\mathbb{R}}
\newcommand{\cE}{{\mathcal{E}}}
\newcommand{\cG}{{\mathcal{G}}}
\newcommand{\cS}{{\mathcal{S}}}
\newcommand{\cA}{{\mathcal{A}}}
\newcommand{\cU}{{\mathcal{U}}}
\newcommand{\cO}{{\mathcal{O}}}
\newcommand{\rZ}{{\mathrm{z}}}
\newcommand{\tp}{\intercal}
\newcommand{\twofigs}[2]{
\hbox to\hsize{\hss
\vbox{\psfig{figure=#1,width=2.7in,height=2.0in}}\qquad
\vbox{\psfig{figure=#2,width=2.7in,height=2.0in}}
\hss}}
\newcommand\currentlabel{\@currentlabel}
\begin{document}

\date{}

\def\spacingset#1{\renewcommand{\baselinestretch}%
{#1}\small\normalsize} \spacingset{1}


\if1\blind
{
  \title{\bf Exact Selective Inference with Randomization}
  \author{Snigdha Panigrahi\thanks{
    The author acknowledges support by NSF-DMS 1951980 and NSF-DMS 2113342.}\hspace{.2cm} \\
    Department of Statistics,
		University of Michigan,
         MI, USA.\\
          and \\
    Kevin Fry\thanks{
    The author acknowledges support by NSF GRFP.}\hspace{.2cm} \\
    Department of Statistics,
		 Stanford University,
         CA, USA.\\
    and \\
    Jonathan Taylor\thanks{
    The author acknowledges support in part by ARO grant 70940MA.}\hspace{.2cm} \\
    Department of Statistics,
		 Stanford University,
         CA, USA.}
  \maketitle
} \fi

\if0\blind
{
  \bigskip
  \bigskip
  \bigskip
  \begin{center}
    {\LARGE\bf Exact Selective Inference with Randomization}
\end{center}
  \medskip
} \fi

\begin{abstract}
We introduce a pivot for exact selective inference with randomization.
Not only does our pivot lead to exact inference in Gaussian regression models, but it is also available in closed form.
We reduce the problem of exact selective inference to a bivariate truncated Gaussian distribution.
By doing so, we give up some power that is achieved with approximate maximum likelihood estimation in \cite{panigrahi2019approximate}.
Yet our pivot always produces narrower confidence intervals than a closely related data splitting procedure.
We investigate the trade-off between power and exact selective inference on simulated datasets and an HIV drug resistance dataset.
\end{abstract}

\noindent%
{\it Keywords:} Data carving, Data splitting, Exact inference, Pivot, Selective inference, Randomization.
\vfill

\newpage
\spacingset{1.3}

\section{Introduction}

The polyhedral method by \cite{exact_lasso} introduced confidence intervals for exact selective inference in Gaussian regression models.
This method provides valid inferences for selected parameters by conditioning on the outcome of selection.
A pivot is obtained for each selected parameter from a truncated Gaussian distribution, provided the outcome of selection can be described by linear constraints, also known as polyhedral constraints. 
However, as shown by \cite{kivaranovic2018expected}, confidence intervals based on this pivot can have infinite length in expectation.

Randomizing data at the time of selection and conditioning on the outcome of randomized selection produces narrower confidence intervals than the polyhedral method.
\cite{kivaranovic2020tight} formally establish that some of these randomized procedures guarantee intervals with bounded lengths.
A stumbling block for subsequent inference, however, is the lack of a pivot in closed form after marginalizing over the added randomization variables.
For example, the pivot based on randomized response, as in \cite{randomized_response}, or on data carving, which involves holding out a random subsample during selection, as in \cite{optimal_inference}, cannot be directly computed.

Recent work by \cite{panigrahi2019approximate} bypassed this computational hurdle by proposing an approximate Gaussian pivot through maximum likelihood estimation. 
The approximate pivot is obtained by solving a convex optimization problem which yields the selection-adjusted maximum likelihood estimator (MLE) and observed Fisher information matrix.
Although computationally appealing, this pivot may not provide adequate coverage if the approximation is inaccurate.  
Moreover, it can be difficult to determine the reliability of the approximation in practical settings. 
Inaccuracies can arise when the dimensions of the problem are significantly larger than the number of available samples. 
To provide an example, consider the case where $n=500$ independent and identically distributed (i.i.d.) samples are generated from a Gaussian linear regression model with $p=1000$ predictors, of which $25$ are true signals with magnitude of $\sqrt{2t \log p}$ and the rest are noise.
We conduct $500$ rounds of simulations with $t$ taking values $0.5$, $0.75$ and $1$.
In all three scenarios, the coverage probability of the approximate pivot produced by ``MLE'' is below the target level $0.90$, as reported under ``Coverage" in Table \ref{Tcov}.
\bigskip

\newcounter{mycounter}

\noindent 
\begin{tcolorbox}[step=mycounter, label=Tcov, width=.475\textwidth, nobeforeafter, title=Table 1. Coverage, tab2,tabularx={X||Y ||Y}]
$t$ & MLE    & Exact          \\\hline\hline
$0.5$   & $86.09\%$ & $89.94\%$  \\\hline
$0.75$ & $86.69\%$ & $90.02\%$  \\\hline
$1$  & $85.56\%$ & $90.30\%$ \\\hline
\end{tcolorbox}
\hfill
\begin{tcolorbox}[step=mycounter, label=Tlen, width=.475\textwidth, nobeforeafter, title=Table 2. Length, tab2,tabularx={X||Y ||Y}]
$t$ & MLE    & Exact        \\\hline\hline
$0.5$   & $20.83$ & $27.14$  \\\hline
$0.75$ & $20.86$ & $26.94$ \\\hline
$1$  & $21.09$ & $27.44$  \\\hline
\end{tcolorbox}
\bigskip

In this paper, we offer a new pivot for selective inference with randomization.
We aim at exact selective inference in closed form, without requiring a case-by-case treatment for different models.
In exchange, we give up some power that is achieved with the approximate Gaussian pivot in \cite{panigrahi2019approximate}.
We note this trade-off between the coverage probabilities and the averaged lengths of the intervals for both methods, ``MLE'' and ``Exact'' (our proposed method), in Tables \ref{Tcov} and \ref{Tlen}.
Despite sacrificing some power, our pivot produces more reliable inferences that (roughly) attain the target coverage probability $0.90$ in all three scenarios. 

We structure the remaining paper as follows. 
In Section \ref{sec:background}, we review existing work in selective inference and present a toy example as a warm-up to our method. 
Our main result in Section \ref{sec:main} presents a pivot for exact selective inference in the well known setting of LASSO regression.
In Section \ref{sec:examples}, we show that our pivot readily applies to different instances of selective inference in Gaussian regression models.
In Section \ref{sec:sims}, we investigate the quality of selective inference with our new pivot on simulated datasets.  
We apply our method to a publicly available HIV drug resistance dataset in Section \ref{sec:real}.
Our empirical experiments investigate the price for exact inference—in terms of power—on simulated and real datasets.
A brief discussion in Section \ref{sec:conclusion} concludes.
Proofs for our results are deferred to the Appendix.

\section{Background}
\label{sec:background}

\subsection{Some preliminaries}

We begin by fixing a set of notations that are used throughout the paper.
Let $[d]=\{1,2,\cdots, d\}$ for $d\in \mathbb{N}$. 
The symbol $e_j  \in \real^d$ is understood as a vector with $1$ in the $j^{\text{th}}$ entry and $0$ elsewhere.
For $\eta \in \real^d$ and $\Theta \in \real^{d\times d}$, $\eta_{j}= e_j^{\tp}\eta$ is the $j^{\text{th}}$ entry of $\eta$, $\Theta_{j,k}= e_j^{\tp} \Theta e_k$ is the $(j,k)^{\text{th}}$ entry of $\Theta$, and $\Theta_{[j]}$ is the $j^{\text{th}}$ row of $\Theta$.

We use $\phi(x; \theta, \Theta)$ to denote the density function of a Gaussian variable with the mean vector $\theta \in \real^d$ and covariance matrix $\Theta\in \real^{d\times d}$ at $x$. 
In particular, when $d=1$, $\theta=0$, $\Theta=1$, we let $\phi(x)$ be the density of a standard normal variable and let $\Phi(x)$ be its cumulative distribution function.
Denote by
$$\text{TP}^{[a,b]}(\theta, \vartheta)= \Phi\left(\frac{1}{\vartheta}(b- \theta)\right)- \Phi\left(\frac{1}{\vartheta}(a-\theta)\right)$$
the truncation probability that a univariate Gaussian variable with mean $\theta$ and variance $\vartheta^2$ lies in the interval $[a,b]$.

For background on selective inference, we consider the standard setting of LASSO regression with a fixed design matrix.
Suppose that we have a vector of outcomes $y \sim \mathcal{N}(\mu, \sigma^2 I_n)\in \real^n$ for an unknown mean parameter $\mu$ and a matrix of $p$ fixed features $X\in \real^{n \times p}$.
We observe $w \sim \mathcal{N}(0_p, \Omega)$, a $p$-dimensional randomization variable that is drawn independently of $y$.
Consider solving 
\begin{equation}
\widehat{b} = \underset{b \in \real^p}{\text{argmin}}\ \frac{1}{2} \|y - Xb\|_2^2  + \frac{\epsilon}{2} \|b\|_2^2  + \lambda\|b\|_1 -w^{\tp} b
\label{randomized:LASSO}   
\end{equation}
with regularization parameter $\lambda \in \real^{+}$.

The selection algorithm in \eqref{randomized:LASSO} gives us a noisy version of the LASSO, which is called the randomized LASSO in \cite{harris2016selective}.
A small, fixed value of $\epsilon \in \real^{+}$ in the objective of the randomized LASSO simply ensures us the existence of a solution. 
The variance of the Gaussian randomization variable is a tuning parameter that is similar to the split proportion in data splitting. 
It lets us control how much information we use to select a model versus how much we use for inference. 
As an example, consider $\Omega(\tau^2) = \tau^2 I_p$. 
If we increase the value of $\tau^2$, it means that we perform a noisier model selection, which reserves more information for inference.
Later in the paper, we discuss incorporating a Gaussian randomization scheme that is related to data splitting.

After solving \eqref{randomized:LASSO}, we seek inference for a set of post-selection parameters.
Here is a common example.
Let 
$$
E= \left\{j\in [p]: |\text{sign}(\widehat{b}_j)|= 1\right\}.
$$
Having observed the selected subset of features $E= \cE$, we infer for 
\begin{equation}
\beta^{\cE} = (X_{\cE}^{\tp} X_{\cE})^{-1} X_{\cE}^{\tp} \mu \in \real^{|\cE|},
\label{post:LASSO:target}
\end{equation}
which is the best linear representation of $\mu$ using the selected subset of features $X_{\cE}$.
For brevity, let $c^{j}= X_{\cE} (X_{\cE}^{\tp} X_{\cE})^{-1}e_j \in \real^{n}$ for $j\in [|\cE|]$.
This allows us to write each entry of $\beta^{\cE}$ as
$$\beta^{\cE}_j = (c^{j})^{\tp} \mu.$$
Note that $\beta^{\cE}_j$ depends on $y$ and $w$ through $c^{j}$, which in turn depends on $\cE$.

\subsection{Existing work}

We begin by reviewing two existing methods that are closely related to our current proposal.
The first method offers an exact pivot for selective inference when solving the standard version of LASSO, without randomization. 
The second method provides an approximate pivot after solving the randomized LASSO.

Both pivots are obtained from a conditional distribution of the outcome variable after conditioning on a proper subset of the observed event.
Conditioning on $\{E=\cE\}$ is ideal if we wanted inference for $\beta^{\cE}$.
However, the ideal event is usually complicated to describe in terms of $y$ and $w$, making the conditional distribution of $y$ given $\{E=\cE\}$ 
less amenable to inferences.
Therefore, conditioning on a subset of the selection event that has a simpler description is a practical solution, which can ensure valid and feasible selective inference.

\noindent\textbf{The polyhedral method}.
Consider solving the standard LASSO \citep{tibs_lasso}, which involves setting $\epsilon=0$ and $w=0_p$ in the objective of \eqref{randomized:LASSO}.
We denote the set of selected features as $E_0$. 
Note that we distinguish $E_0$ from the selected set $E$, which is obtained from solving the randomized LASSO.

Having observed $E_0=\cE_0$, fix $c_0^{j}= X_{\cE_0} (X_{\cE_0}^{\tp} X_{\cE_0})^{-1}e_j \in \real^{n}$ that leads to
$$\beta^{\cE_0}_j = (c_0^{j})^{\tp} \mu,$$
our parameters after selection.
Let $S_0\in \real^{|E_0|}$ be the vector of nonzero signs.
Let $\h{\beta}^{\cE_0}$ denote the least squares estimator when we regress $y$ against $X_{\cE_0}$ and
let 
$$
\h{\Gamma}_0^{j}= \left(I-\frac{c_0^{j}(c_0^{j})^{\tp}}{\|c_0^{j}\|^2_2}\right)y
$$
be the projection of $y$ onto the orthogonal complement of the subspace spanned by $c_0^{j}$.

Conditional on $E_0=\cE_0$, $S_0=\cS_0$ and the value of $\h{\Gamma}^{j}_0$, the polyhedral method in \cite{exact_lasso} gives an exact pivot by truncating a univariate Gaussian variable, with mean $\beta^{\cE_0}_{j}$ and variance $\sigma^2\|c_0^{j}\|_2^2$, to an interval $[H_{-}^{j}, H_{+}^{j}]$.
The pivot takes the form
\begin{equation}
\mathcal{P}^j_{\text{Poly}}(\beta^{\cE_0}_j)=\dfrac{\bigintsss_{-\infty}^{\h{\beta}^{\cE_0}_j}\phi\left((\sigma \|c^{j}_0\|_2)^{-1}(x-\beta^{\cE_0}_j)\right) \cdot 1_{[H_{-}^{j}, H_{+}^{j}]}(x) dx}{\bigintsss_{-\infty}^{\infty}\phi\left((\sigma \|c^{j}_0\|_2)^{-1}(x-\beta^{\cE_0}_j)\right) \cdot 1_{[H_{-}^{j}, H_{+}^{j}]}(x) dx},
\label{Lee:pivot}
\end{equation}
where the expressions for $H_{-}^j$ and $H_{+}^j$ depend on $\cE_0$, $\cS_0$ and $\h{\Gamma}_0^{j}$.

\noindent\textbf{The MLE method}.  
Next, we turn to selective inference with the randomized LASSO.
The approximate MLE method in \cite{panigrahi2019approximate} uses the likelihood of $y$ when conditioned on 
\begin{equation}
\{G= \cG\},
\label{cond:MLE}
\end{equation}
where 
$$G=\partial_{\; \widehat{b}\;}{ \|b\|_1}$$
is the subgradient of the $\ell_1$-penalty at the randomized LASSO solution. 
Similar to the polyhedral method, the conditioning event is a proper subset of the ideal event $\{E= \cE\}$.

Let $\h{b}^{\cE}$ and $\h{I}^{\cE}$ denote the MLE and the observed Fisher information matrix in this conditional likelihood.
An approximate Gaussian pivot for $\beta^{\cE}_{j}$ is given by
\begin{equation}
\mathcal{P}^j_{\text{MLE}}(\beta^{\cE}_j)= {\Phi}\left(\frac{1}{\sqrt{(\h{I}^\cE)^{-1}_{j,j}}} (\h{b}^{\cE}_j-\beta^{\cE}_j)\right).
\label{approx:Gaussian:pivot}
\end{equation}
Equivalently, confidence intervals for each component of $\beta^{\cE}$ are calculated by centering them around the $j^{\text{th}}$ entry of the MLE, with the variance estimated by the corresponding diagonal entry of the observed Fisher information matrix. 
However, the seemingly simple Gaussian pivot involves computing the exact conditional likelihood function, which cannot be done in closed form, hence making it difficult to compute the two estimators. 
To overcome this, the approximate MLE method derives approximate values for  $\h{b}^{\cE}$ and $\h{I}^{\cE}$, which rely on a consistent approximation to the exact conditional likelihood. 

\subsection{Warm-up}

We can now informally present the central idea of our paper using a toy example with two features (p=2).
We solve \eqref{randomized:LASSO} with $\epsilon =0$ and $w \sim \mathcal{N}(0_2, \Omega) \in \real^2$, where $\Omega= \tau^2 X^T X$.

Say that we select the full model, i.e., $\cE=\{1,2\}$, and that we focus on the first component of the $2$-dimensional post-selection parameter 
$$\beta_1^{\cE}= (c^{1})^{\tp} \mu.$$
Let $\h{\beta}^{\cE}$ be the least squares estimator when regressing $y$ against $X_{\cE}$ and let $\widehat\beta_1^{\cE}$ be its first component.

Introducing some additional notation, let $O\in \mathbb{R}^2$ denote the non-zero randomized LASSO solution in this example.
Let $S=\text{sign}(O)$ be the corresponding sign vector and let $\cS$ be the observed value of $S$.
Recall that the existing MLE method makes inferences after conditioning on the event $\{G=\cG\}$.
Since $\cG= \cS$ in this example, it is easy to see that this conditioning event can be described as 
 \begin{equation}
 \left\{-\text{diag}(\cS) O<0_2\right\}.
 \label{cond:toy}
 \end{equation}
While the previously mentioned MLE method obtains an approximate Gaussian pivot with this conditioning event, we can simplify the event by conditioning on some additional information that reduces the conditioning event to an interval on the real line.
This is the central idea behind constructing an exact pivot in closed form.

 In this specific toy example, we condition on
 $$A= O_2- \dfrac{e_1^\tp (X^\tp X)^{-1} e_2}{e_1^\tp (X^\tp X)^{-1} e_1}O_1,$$
 in addition to conditioning on the value of $G$. 
Because
$$O= \left(\dfrac{1}{e_1^\tp (X^\tp X)^{-1} e_1} (X^\tp X)^{-1} e_1\right) O_1 + \begin{pmatrix} 0 \\ A\end{pmatrix},$$
the initial conditioning event in \eqref{cond:toy} simplifies to
$$\left\{I^1_{-} \leq O_1 \leq I^1_{+}\right\}$$
after conditioning on $A$, where $[I^1_{-},I^1_{+}]$ is a fixed interval.
Consequently, we can obtain an exact pivot in closed form by computing the joint bivariate distribution of $\widehat\beta_1^{\cE}$ and $O_1$ when truncated to the region $\real \times [I^1_{-},I^1_{+}]$.
We explain our choice for additional conditioning and derive a bivariate truncated Gaussian distribution in the next section.


To conclude, we note the difference between our method and the polyhedral method, which is shown in Figure \ref{fig:newpivot}.
For drawing selective inference, the polyhedral method truncates the Gaussian distribution of $\widehat\beta_1^{\cE}$ to the interval $[H_{-}^{1}, H_{+}^{1}]$, while our method truncates the joint bivariate distribution of $\widehat\beta_1^{\cE}$ and $O_1$ to $\real \times [I^1_{-},I^1_{+}]$.
\begin{figure}[h]
\begin{center}
\centerline{\includegraphics[width=1\linewidth]{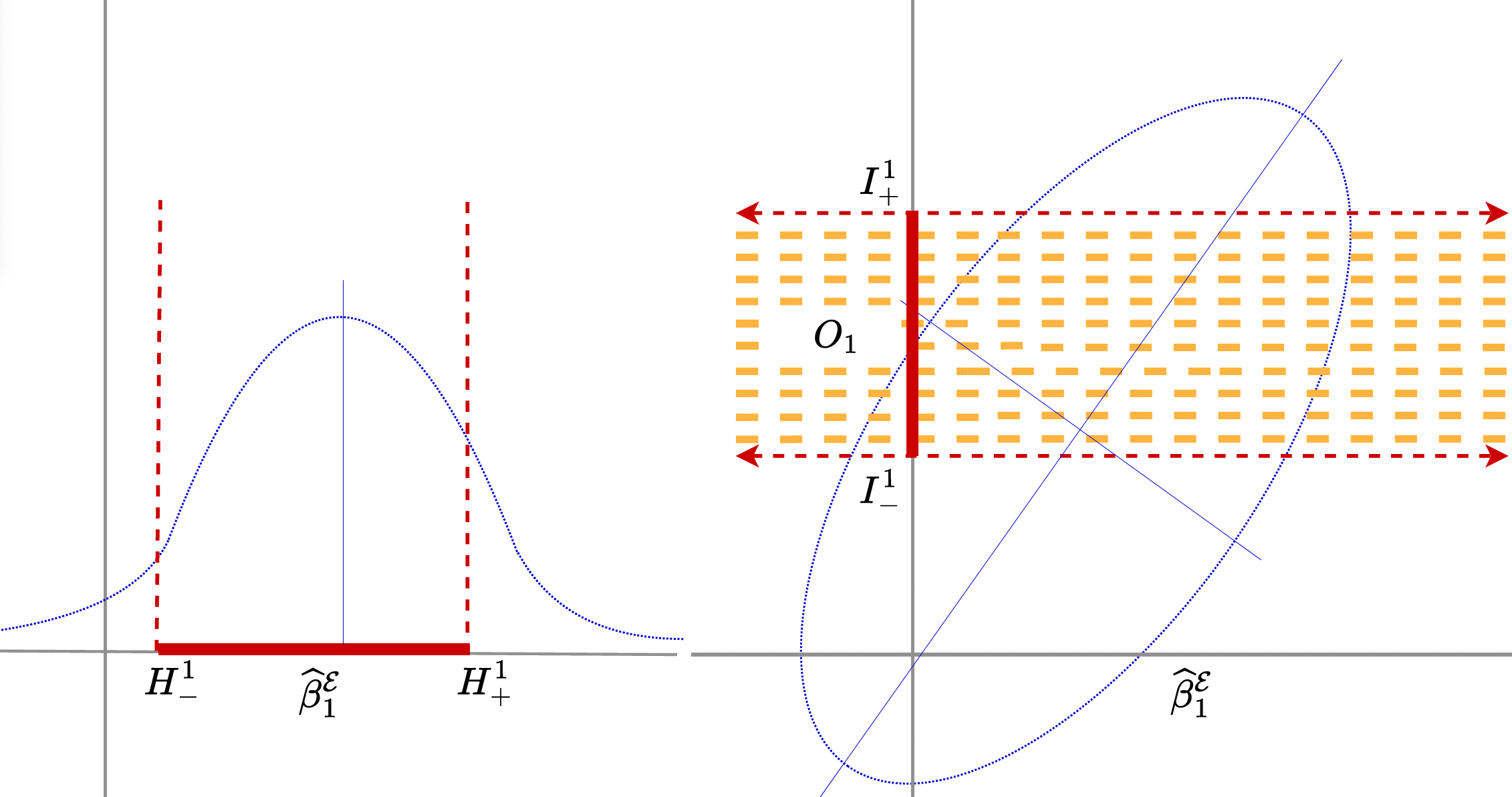}}
\end{center}
\vspace{-1cm}
\caption{Comparison with the polyhedral method.
}
\label{fig:newpivot}
\end{figure} 

\subsection{Connections with other work}

Several papers have demonstrated the effectiveness of the conditional approach for selective inference across various problems, as evidenced by \cite{lee2014exact, yang2016selective, suzumura2017selective, charkhi2018asymptotic, hyun2018exact, chen2020valid, zhao2019selective, tanizaki2020computing, gao2020selective, duy2020computing}.
A significant focus of the current research in this field is on enhancing the power of earlier approaches. 
Before we discuss these improvements, it is worth noting that two other approaches to selective inference have been studied in parallel.

The first is the simultaneous inference approach, which has been investigated in \cite{posi, bachoc2020uniformly}.
This approach is not customized to a particular selection method, but the downside is that the confidence intervals are relatively long and may not permit easy calculations in some instances. 
The second approach is data splitting. 
This method allows for valid selective inference when the available data can be split into two independent sets. 
One set is used as training data for the selection process, while the other set is held out as validation data for selective inference. 
Combined with the bootstrap in regression models, \cite{rinaldo2019bootstrapping} conduct selective inference by splitting the sample space.
Recently, new forms of data splitting have been introduced by \cite{rasines2021splitting, leiner2021data, neufeld2022inference}, which split each observation into two parts to construct a training set for selection and a validation set for selective inference. 
However, these variants of data splitting lose power by discarding data used in selection. 
In our simulations, we confirm that inverting our pivot results in narrower confidence intervals than two such forms of data splitting.

There are two main branches of the conditional approach that have improved power and overcome the limitations of the polyhedral approach. 
The first branch of work involves choosing a minimal conditioning set that can be achieved in some special settings. 
For example, \cite{liu2018more} condition on strictly less information than the polyhedral method when inference is based on a full linear model $y\sim \mathcal{N}(X\beta, \sigma^2 I_n)$. 
In the saturated model $y\sim \mathcal{N}(\mu, \sigma^2 I_n)$, \cite{le2022more} apply parametric programming to avoid conditioning on the signs of the LASSO coefficients and \cite{carrington2023improving} condition on less information to provide inference for detected changepoints.
The second branch of work utilizes randomization variables at the time of selection to remedy a loss in power. 
Some of these randomized procedures can be viewed as a more efficient alternative to data splitting and appear as data carving in existing literature \citep{optimal_inference, panigrahi2019carving, schultheiss2021multicarving}.
Randomization variables have been used to deliver powerful Bayesian inference after model selection in papers by \cite{selective_bayesian, panigrahi2020approximate, panigrahi2022radiogenomics}. 
Our work falls in the latter category, where we provide a principled approach to choose a conditioning event and construct a pivot thereof that can work with different Gaussian regression models after selection.

It is not a new idea to find a conditioning event that can lead to a bivariate truncated distribution. 
In \cite{kivaranovic2020tight}, one such construction is noted, where noise is added to a Gaussian response as proposed by \cite{randomized_response}. 
This work achieved an exact pivot by conditioning on the projection of the noisy response onto the orthogonal complement of the subspace spanned by the direction vector of interest. 
However, in our paper, we have employed a different randomization scheme, which involves adding noise to the optimization objective. 
As demonstrated in \cite{huang2023selective}, this scheme has the potential to be applied to a broad range of M-estimation problems, not just the least squares estimation problem. 
For example, while adding Gaussian noise to a binary response in logistic regression might not be meaningful, adding noise to the log-likelihood would create a noisy estimation problem. 
Although our primary focus in this paper is on exact selective inference, our pivot is likely to generalize and provide asymptotic inference in a more comprehensive context. 
We provide a discussion on this topic in our concluding remarks.

\section{Exact selective inference with the LASSO}
\label{sec:main}

\subsection{Conditioning event}
\label{sec3.1}

We continue using the randomized LASSO to explain our approach in the general Gaussian regression setting. 

Defining some notations, we denote by $O \in \real^{|\cE|}$ the active (nonzero) LASSO solution, and by $S=\text{sign}(O)$ the associated sign vector. 
Throughout, we assume that the active components of the LASSO solution are stacked before its inactive components. 
The $p$-dimensional subgradient of the $\ell_1$-penalty at the randomized LASSO solution is denoted by
$$G = \begin{pmatrix} S \\ U \end{pmatrix},$$
where $U \in \real^{p-|\cE|}$ collects the components of the subgradient subvector in $\cE^c$. 
To represent the realized values of the variables $O$, $S$, and $U$, we use the symbols $\cO$, $\cS$, and $\cU$, respectively.

At the randomized LASSO solution, observe that
\begin{equation*}
w= Py+  Q\cO + R \cU +T,
\end{equation*}
where
\begin{align}
\begin{gathered}
P=-\begin{bmatrix}X_{\cE}^{\tp}\\ X_{\cE^c}^{\tp} \end{bmatrix}, \; Q= \begin{bmatrix} X_{\cE}^{\tp} X_{\cE} + \epsilon I_{|\cE|} \\ X_{\cE^c}^{\tp} X_{\cE} \end{bmatrix}, \;R= \begin{bmatrix} 0_{|\cE|, p-|\cE|} \\ \lambda I_{p-|\cE|} \end{bmatrix}, \; T= \begin{pmatrix} \lambda \cS \\ 0_{p-|\cE|}\end{pmatrix},
\end{gathered}
\label{PQRT}
\end{align}
and we have assumed that the active components are stacked before the inactive ones in our matrices.

As demonstrated in the previous section's toy example, we first identify a conditioning event that will guide us to a pivot for exact selective inference.
Extending the method by \cite{panigrahi2019approximate}, we condition on
$$\{G= \cG\},$$
which can be described as
\begin{equation}
\{L O< M, \ U=\cU\},
\label{cond:event}
\end{equation}
for
$$
L= -\text{diag}(\cS),\ M= 0_{|\cE|}.
$$

To reduce our conditioning event to an interval and obtain a closed-form pivot, we condition on some more information.
Proposition \ref{Lem:cond} states this event, which is equivalent to truncating a linear combination of $O$ to a fixed interval.
To present this result, we introduce a few matrices that rely on the covariance of the randomization variables and the matrices defined in \eqref{PQRT}.
Let
\begin{align*}
\begin{gathered}
\Theta = (Q^{\tp}\Omega^{-1} Q)^{-1}, \ P^{j} = \frac{1}{\|c^{j}\|^2_2}Pc^{j}\\
r^{j} =  Q^{\tp} \Omega^{-1} P^{j} \in \real^{|\cE|},\ Q^j=\frac{1}{(r^{j})^{\tp} \Theta r^{j}}\Theta r^{j},
\end{gathered}
\end{align*}
 for $j\in [|\cE|]$.

\begin{proposition}
Define the variables
\begin{equation}
A^{r^j} = \left( I_{|\cE|}-  Q^j(r^{j})^{\tp} \right) O \in \real^{|\cE|}.
\label{additional:cond}
\end{equation}
For $j\in [|\cE|]$, it holds that
$$
\left\{ G= \cG, A^{r^j}= \cA^{r^j} \right\} = \left\{I^j_{-} <  (r^{j})^{\tp} O < I^j_{+}, \; U=\cU, A^{r^j}= \cA^{r^j} \right\},
$$
where 
$$
I^j_{-} = \underset{k\in S^j_{-}}{\text{max}} \left\{\frac{1}{L_{[k]}^{\tp}Q^j } (M_k - L_{[k]}^{\tp}\cA^{r^j})\right\},\  I^j_{+} = \underset{k\in S^j_{+}}{\text{min}}\left\{ \frac{1}{L_{[k]}^{\tp}Q^{j} } (M_k - L_{[k]}^{\tp}\cA^{r^j})\right\},
$$
and
$$S^j_{-} = \left\{k :  L_{[k]}^{\tp} \Theta r^{j} <0\right\},  \ S^j_{+} = \left\{k :  L_{[k]}^{\tp} \Theta r^{j} >0\right\}.$$
\label{Lem:cond}
\end{proposition}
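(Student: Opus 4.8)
The plan is to express the active solution $O$ as an affine function of the single scalar $(r^{j})^{\tp}O$ once the vector $A^{r^j}$ is held fixed, and then to push this reparametrization through the linear description $\{LO<M,\ U=\cU\}$ of the event $\{G=\cG\}$ recorded in \eqref{cond:event}. The whole argument is linear algebra built on one elementary identity: $(r^{j})^{\tp}Q^{j}=1$. This is immediate from $Q^{j}=\Theta r^{j}/\big((r^{j})^{\tp}\Theta r^{j}\big)$, provided $(r^{j})^{\tp}\Theta r^{j}>0$; and $\Theta=(Q^{\tp}\Omega^{-1}Q)^{-1}$ is positive definite because $\Omega$ is a covariance matrix and $Q$ has full column rank (its active block $X_{\cE}^{\tp}X_{\cE}+\epsilon I_{|\cE|}$ is invertible). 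From $(r^{j})^{\tp}Q^{j}=1$ the matrix $Q^{j}(r^{j})^{\tp}$ is idempotent, so the defining relation \eqref{additional:cond} for $A^{r^j}$ rearranges to the decomposition
$O = Q^{j}\,\big((r^{j})^{\tp}O\big) + A^{r^j}$, with $(r^{j})^{\tp}A^{r^j}=0$. In particular, on the event $\{A^{r^j}=\cA^{r^j}\}$ the entire vector $O$ is pinned down by the scalar $s:=(r^{j})^{\tp}O$ via $O=Q^{j}s+\cA^{r^j}$, and this is self-consistent since $(r^{j})^{\tp}\cA^{r^j}=0$. (This is exactly the decomposition used in the warm-up toy example, specialized there to $\Omega=\tau^{2}X^{\tp}X$.)

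Next I would substitute $O=Q^{j}s+\cA^{r^j}$ into the polyhedral constraint $LO<M$ and read it off row by row. The $k$-th inequality becomes $\big(L_{[k]}^{\tp}Q^{j}\big)\,s < M_k-L_{[k]}^{\tp}\cA^{r^j}$, a single linear inequality in $s$. Since $(r^{j})^{\tp}\Theta r^{j}>0$, the sign of the coefficient $L_{[k]}^{\tp}Q^{j}$ agrees with the sign of $L_{[k]}^{\tp}\Theta r^{j}$, so: each $k\in S^{j}_{-}$ gives a lower bound $\frac{1}{L_{[k]}^{\tp}Q^{j}}\big(M_k-L_{[k]}^{\tp}\cA^{r^j}\big)$ on $s$; each $k\in S^{j}_{+}$ gives an upper bound of the same form; and any $k$ with $L_{[k]}^{\tp}\Theta r^{j}=0$ produces a constraint not involving $s$, which on $\{G=\cG,\ A^{r^j}=\cA^{r^j}\}$ must already hold (and if it fails to hold for the given $\cA^{r^j}$, both sides of the asserted identity are empty). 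Intersecting the lower bounds gives the maximum $I^{j}_{-}$ and intersecting the upper bounds gives the minimum $I^{j}_{+}$, so $\{LO<M\}\cap\{A^{r^j}=\cA^{r^j}\}=\{I^{j}_{-}<(r^{j})^{\tp}O<I^{j}_{+}\}\cap\{A^{r^j}=\cA^{r^j}\}$. Adjoining the untouched event $\{U=\cU\}$ and invoking \eqref{cond:event} yields the stated set equality.

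Everything here is bookkeeping, so I do not expect a genuine obstacle; the only points that need care are the handling of rows $k$ with $L_{[k]}^{\tp}\Theta r^{j}=0$ (these must be argued to impose no constraint on $(r^{j})^{\tp}O$ and to be compatible with $\{A^{r^j}=\cA^{r^j}\}$, not silently discarded) and the verification that $Q^{\tp}\Omega^{-1}Q$ is invertible and positive definite so that $Q^{j}$ is well defined and the sign correspondence $\operatorname{sign}(L_{[k]}^{\tp}Q^{j})=\operatorname{sign}(L_{[k]}^{\tp}\Theta r^{j})$ is valid. I would also note in passing that the same computation, with $r^{j}$ replaced by any other direction, would give an analogous interval, which is what makes the construction model-agnostic in Section \ref{sec:examples}.
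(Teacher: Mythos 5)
Your proposal is correct and takes essentially the same route as the paper's own proof: write $O = Q^{j}\,(r^{j})^{\tp}O + A^{r^j}$, substitute into the linear description $\{LO<M,\ U=\cU\}$ of $\{G=\cG\}$, and read off the row-wise bounds on $(r^{j})^{\tp}O$, taking the maximum over $S^{j}_{-}$ and the minimum over $S^{j}_{+}$. Your additional remarks on the rows with $L_{[k]}^{\tp}\Theta r^{j}=0$ and on the positive definiteness of $\Theta$ are sound points of care that the paper leaves implicit, but they do not change the argument.
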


From the previous result, we observe that the conditioning event involves extra information in the form of $A^{r^j}$, representing linear combinations of the active LASSO coefficients.  
We motivate our choice of conditioning event later. 
In the next section, we obtain a pivot for $\beta_j^{\cE}$ by conditioning on the event in Proposition \ref{Lem:cond}.

\subsection{Pivot}
\label{sec3.2}

Let $\h{\beta}^{\cE}$ be the least squares estimator obtained by regressing $y$ on $X_{\cE}$.
Specifically, let $\h{\beta}^{\cE}_j=(c^{j})^\tp y$ denote the $j^{\text{th}}$ entry of $\h{\beta}^{\cE}$. 
Define $\h{\Gamma}^{j}$ as the projection of $y$ onto the orthogonal complement of the subspace spanned by $c^{j}$. 

Note that 
$$\mu = \frac{c^{j}}{\|c^{j}\|^2_2} (c^{j})^{\intercal} \mu + \mathcal{P}^{\perp}_{c^{j}} \mu.$$
When inferring for $(c^{j})^{\intercal} \mu$, the projection $\mathcal{P}^{\perp}_{c^{j}} \mu$ includes nuisance parameters.
To eliminate these parameters, we follow a similar approach as in \cite{exact_lasso} and condition on $\h{\Gamma}^{j}$.
This allows us to obtain a conditional density that involves only our parameter of interest, $\beta_j^{\cE}$.
We can then use its CDF to obtain a pivot.

To state our main result, we introduce the functions
\begin{align*}
\begin{gathered}
\Lambda(y, \cU)= -(P^{j})^{\tp} \Omega^{-1}(Py+ R \cU + T),\\
\Delta(y, \cU)= -\Theta Q^{\tp}\Omega^{-1}(Py+ R \cU + T).
\end{gathered}
\end{align*} 

\begin{theorem}
Define the random variable 
\begin{equation*}
\begin{aligned} 
\mathcal{P}^j_{\text{\normalfont Exact}} (\beta^{\cE}_j)=\dfrac{\bigintsss_{-\infty}^{\h{\beta}^{\cE}_j}\phi\left(\frac{1}{\sigma^j}(x-\lambda^j\beta^{\cE}_j- \zeta^j)\right) \cdot \text{\normalfont TP}^{[I^j_{-},I^j_{+}]}(\theta^j(x), \vartheta^j)dx }{\bigintsss_{-\infty}^{\infty} \phi\left(\frac{1}{\sigma^j}(x-\lambda^j\beta^{\cE}_j- \zeta^j)\right)\cdot \text{\normalfont TP}^{[I^j_{-},I^j_{+}]}(\theta^j(x), \vartheta^j)dx},
\end{aligned}
\end{equation*}
where the constants $\vartheta^j$, $\sigma^j$, $\lambda^j$, $\zeta^j$ and the univariate function $\theta^j$ are computed as
\begin{align*}
\begin{gathered}
 (\vartheta^j)^2=(r^{j})^{\tp}\Theta r^{j}, \ (\sigma^{j})^2= \left(\frac{1}{\sigma^2 \|c^{j}\|_2^2} + (P^{j})^{\tp}\Omega^{-1} P^{j} -(\vartheta^j)^2\right)^{-1},\\
 \lambda^j=\dfrac{1}{\sigma^2\|c^{j}\|_2^2}(\sigma^{j})^2,\  \zeta^j=(\sigma^{j})^2\cdot\left(\Lambda(\h{\Gamma}^{j}, \cU) - (r^{j})^{\tp} \Delta(\h{\Gamma}^{j}, \cU)\right).\\
 \theta^j(x)= (r^{j})^{\tp}\Delta(\h{\Gamma}^{j}, \cU)-(\vartheta^j)^2 x.
\end{gathered}
\end{align*}
Conditioned on the event in Proposition \ref{Lem:cond}, $\mathcal{P}^j_{\text{\normalfont Exact}} (\beta^{\cE}_j)$ is distributed as a $\text{\normalfont Unif}(0,1)$ variable.
\label{thm:main}
\end{theorem}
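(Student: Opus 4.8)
The plan is to show that, conditioned on the event $\{G=\cG,\ A^{r^j}=\cA^{r^j}\}$ together with $\h\Gamma^j$, the pair $\bigl(\h\beta^{\cE}_j,\ (r^j)^\tp O\bigr)$ follows a bivariate Gaussian law truncated to $\real\times[I^j_-,I^j_+]$, and then to integrate out the second coordinate to recover a univariate truncated density in $\h\beta^{\cE}_j$ whose CDF is exactly $\mathcal{P}^j_{\text{Exact}}$. First I would write down the joint (pre-selection) distribution of $(y,w)$, change coordinates from $w$ to $(O,U)$ using the stationarity identity $w=Py+Q\cO+R\cU+T$ from \eqref{PQRT} (the Jacobian is constant in the relevant block, so it drops out of the pivot), and condition on $U=\cU$. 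This yields a joint Gaussian density in $(y,O)$; since $y\sim\mathcal N(\mu,\sigma^2 I_n)$ and $w\sim\mathcal N(0,\Omega)$ are independent, the conditional density of $O$ given $y$ is Gaussian with mean $-\Theta Q^\tp\Omega^{-1}(Py+R\cU+T)=\Delta(y,\cU)$ and covariance $\Theta$, which is where $\Theta=(Q^\tp\Omega^{-1}Q)^{-1}$ comes from.

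Next I would reduce to the scalars of interest. By Proposition \ref{Lem:cond}, conditioning additionally on $A^{r^j}=\cA^{r^j}$ replaces the event $\{LO<M\}$ by the interval constraint $\{I^j_-<(r^j)^\tp O<I^j_+\}$, so only the one-dimensional linear functional $(r^j)^\tp O$ matters for the truncation. I would then decompose $y$ as $\dfrac{c^j}{\|c^j\|_2^2}\h\beta^{\cE}_j+\h\Gamma^j$ and, conditioning on $\h\Gamma^j$, track how $\h\beta^{\cE}_j$ enters: through $Py$ it contributes the term $P^j\h\beta^{\cE}_j$ (using $P^j=\|c^j\|_2^{-2}Pc^j$), so $(r^j)^\tp\Delta(y,\cU)$ becomes an affine function of $\h\beta^{\cE}_j$ and the univariate $\theta^j(x)$ is exactly the conditional mean of $-(\vartheta^j)^2\cdot$(something)$+\,(r^j)^\tp O$ reparametrized — the variance $(\vartheta^j)^2=(r^j)^\tp\Theta r^j$ is $\Var((r^j)^\tp O\mid y)$. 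Putting the density of $\h\beta^{\cE}_j$ (Gaussian, mean $\beta^{\cE}_j$, variance $\sigma^2\|c^j\|_2^2$) together with the conditional density of $(r^j)^\tp O$ given $\h\beta^{\cE}_j$, completing the square in the two-variable exponent produces (i) a marginal-type Gaussian factor in $\h\beta^{\cE}_j$ with the shrunk variance $(\sigma^j)^2$, slope $\lambda^j$, and offset $\zeta^j$ as defined, and (ii) a residual Gaussian factor in $(r^j)^\tp O$ that, after integrating over $[I^j_-,I^j_+]$, gives the truncation-probability factor $\text{TP}^{[I^j_-,I^j_+]}(\theta^j(x),\vartheta^j)$. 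This identifies the conditional density of $\h\beta^{\cE}_j$ up to normalization as the integrand appearing in $\mathcal{P}^j_{\text{Exact}}$.

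Finally, since $\mathcal{P}^j_{\text{Exact}}(\beta^{\cE}_j)$ is literally the conditional CDF of $\h\beta^{\cE}_j$ evaluated at the observed value $\h\beta^{\cE}_j$, and that conditional distribution is continuous, the probability integral transform gives that it is $\text{Unif}(0,1)$ conditionally on the event of Proposition \ref{Lem:cond} — and hence also unconditionally, by averaging. The main obstacle I anticipate is purely computational bookkeeping: carefully carrying the additional conditioning on $A^{r^j}$ through the Gaussian algebra and verifying that the cross term in $(\h\beta^{\cE}_j,(r^j)^\tp O)$ collapses precisely into the stated $\lambda^j,\zeta^j,\theta^j$ — in particular checking that the coefficient of $\beta^{\cE}_j$ is $\lambda^j$ and not $1$, which is the hallmark of the randomized (as opposed to polyhedral) pivot, and confirming that conditioning on $A^{r^j}$ does not further constrain $\h\beta^{\cE}_j$ so that its conditional support remains all of $\real$.
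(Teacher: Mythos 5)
Your overall strategy coincides with the paper's: reduce to the joint law of $\bigl(\h{\beta}^{\cE}_j,\,(r^j)^\tp O\bigr)$ truncated to $\real\times[I^j_-,I^j_+]$, marginalize out $(r^j)^\tp O$ to get the truncation-probability factor, and finish with the probability integral transform. However, there is a genuine gap at the completing-the-square step. The product you propose to work with, namely $\phi\bigl(b;\beta^{\cE}_j,\sigma^2\|c^j\|_2^2\bigr)$ times the conditional density of $(r^j)^\tp O$ given $\h{\beta}^{\cE}_j=b$ (which is exactly $\phi\bigl(\rZ;\theta^j(b),(\vartheta^j)^2\bigr)$), is already written in marginal-times-conditional form; completing the square in it returns a marginal in $b$ with variance $\sigma^2\|c^j\|_2^2$ and slope $1$, not the shrunk variance $(\sigma^j)^2$ and slope $\lambda^j$. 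What is missing is the factor arising from conditioning on $\{U=\cU\}$: the density of $U$ at $\cU$ given $(\h{\beta}^{\cE}_j,\h{\Gamma}^j)$ depends on $b$ and contributes precision $(P^j)^\tp\Omega^{-1}P^j-(\vartheta^j)^2$ to the $b$-marginal; this is precisely how $(\sigma^j)^{-2}=\tfrac{1}{\sigma^2\|c^j\|_2^2}+(P^j)^\tp\Omega^{-1}P^j-(\vartheta^j)^2$ and $\lambda^j\neq 1$ arise (they collapse to $\sigma^2\|c^j\|_2^2$ and $1$ only in the special carving covariance of Corollary \ref{exact:pivot:carving}). The paper's proof carries the term $\ell_{U\lvert \h{\beta}_j^{\cE},\h{\Gamma}^j}(\cU\lvert b,g)$ explicitly through Steps 1--3; equivalently, you must keep the full randomization density $\phi\bigl(P^j b+P\h{\Gamma}^j+QO+R\cU+T;0,\Omega\bigr)$ as a joint function of $(b,O)$, whose $b^2$ coefficient $(P^j)^\tp\Omega^{-1}P^j$ feeds the stated constants, rather than only the conditional law of $(r^j)^\tp O$ given $b$.

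A second, smaller omission: conditioning on the value of the continuous variable $A^{r^j}$ introduces a density factor $\ell_{A^{r^j}\lvert \h{\beta}_j^{\cE},\h{\Gamma}^j,\cU}(\cA^{r^j}\lvert b,\cdot,\cdot)$ which, a priori, could depend on $b$ and would then alter the integrand of the pivot; the relevant worry is not whether the support of $\h{\beta}^{\cE}_j$ remains $\real$, but whether this extra factor is free of $b$. The paper proves it is, via the conditional independence of $(r^j)^\tp O$ and $A^{r^j}$ given $(y,\cU)$ (Lemma \ref{lem:indep}) and the computation in \eqref{simplify:1}, both of which hinge on the specific choice $\eta=r^j$, since $\bigl(I-Q^j(r^j)^\tp\bigr)\Theta r^j=0$ (equivalently, the $b$--$\cA^{r^j}$ cross term in the exponent vanishes because $(r^j)^\tp\cA^{r^j}=0$). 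Your argument should verify this cancellation explicitly; it is the structural reason the additional conditioning is ``free'' here, and without it the claimed form of the pivot does not follow.
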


Inverting the pivot in Theorem \ref{thm:main} gives a confidence interval for $\beta^{\cE}_j$. 
At a predetermined significance level $\alpha$, a two-sided confidence interval for $\beta^{\cE}_j$ is equal to
$$
 \left(L^{j}_{\alpha}, U^{j}_{\alpha}\right)= \left\{b \in \real:  \mathcal{P}^j_{\text{\normalfont Exact}} (b) \in \left[\frac{\alpha}{2},1-\frac{\alpha}{2} \right]\right\}.
$$

A few comments are in order here.
\begin{remark}
The choice of the simple model $y \sim \mathcal{N}(\mu, \sigma^2 I_n)\in \real^n$ was made for ease of presentation. 
However, it is essential to note that our pivot can also be applied to other Gaussian regression models, such as the model in \cite{optimal_inference} where $y\sim \mathcal{N}(X_{\cE} \beta_{\cE}, \sigma^2 I_n)$ or the full model in \cite{liu2018more} where $y\sim \mathcal{N}(X\beta, \sigma^2 I_n)$. 
The only difference would be in the definition of $c^j$ for each model, which depends on the post-selection parameters chosen for inference.
\end{remark}

\begin{remark}
\cite{liu2018more} noted that the ideal conditioning event could vary across different models.
In some special situations, such as when inferring for the selected regression parameters in a full model $y \sim \mathcal{N}(X\beta, \sigma^2 I_n)$, conditioning on less information than the polyhedral method is possible. 
However, in our method, the conditioning event based on the outcome of the randomized selection algorithm is the same for different regression models. 
Therefore, the construct of our pivot is consistent regardless of our modeling preferences.
In our empirical experiments, we demonstrate the performance of our pivot in the selected and full models.
\end{remark}


\subsection{Pivot motivated by data carving}
\label{sec3.2}

We instantiate our pivot using a Gaussian randomization scheme that can be seen related to the data carving proposal in \cite{optimal_inference}.
Data carving is similar to data splitting in that it involves using a subset of the data for selection, but differs from data splitting in that it uses the entire dataset for inference instead of relying solely on the held-out portion. 

Suppose that we apply the LASSO method to a subsample of size $n_1$ drawn from a dataset that contains $n$ i.i.d. pairs of observations $(y_i, x_i) \in \mathbb{R}^{p+1}$. 
Then, the LASSO on the subsample is asymptotically equivalent to solving a randomized LASSO with 
\begin{equation}
w\sim N\left(0_p, \tau^2\mathbb{E}[x_1 x_1^\tp]\right),
    \label{randomization:carving}
\end{equation}
where $\tau^2= \sigma^2 \cdot\frac{(n-n_1)}{n_1}$. 
This result is formally stated in \cite{selective_bayesian}.
We provide some additional details in the Appendix to offer insights into this connection.
This motivates us to solve
\begin{equation}
   \underset{b \in \real^p}{\text{minimize}}\ \frac{1}{2}\|y-Xb\|_2^2 + \lambda \|b\|_1 -w^{\tp} b,
\label{randomized:LASSO:eg}   
\end{equation}
with $w$ drawn from a Gaussian distribution with mean $0_p$ and covariance $\tau^2 X^\tp X$, which is the sample analog of the covariance matrix in \eqref{randomization:carving}.
Recall that this was also the randomization scheme in our toy example.

Using this particular form of Gaussian randomization, we can observe that the value of $r^j$ is directly proportional to $e_j\in \real^{|\cE|}$. 
This means that our conditioning event is equivalent to truncating the $j^{\text{th}}$ active LASSO coefficient $O_j$ to an interval on the real line, which is depicted in Figure \ref{fig:newpivot}.
As a result, our pivot in Theorem \ref{thm:main} simplifies as follows.
\begin{corollary}
\label{exact:pivot:carving}
Suppose that $\Omega$ is defined according to \eqref{randomization:carving}.
Then,
$$\mathcal{P}^j_{\text{\normalfont Exact}} (\beta^{\cE}_j)=\dfrac{\bigintsss_{-\infty}^{\h{\beta}^{\cE}_j}\phi\left((\sigma \|c^{j}\|_2)^{-1}(x-\beta^{\cE}_j)\right) \cdot \text{\normalfont TP}^{[I^j_{-},I^j_{+}]}(\theta^j(x), \vartheta^j)dx }{\bigintsss_{-\infty}^{\infty} \phi\left((\sigma \|c^{j}\|_2)^{-1}(x-\beta^{\cE}_j)\right)\cdot \text{\normalfont TP}^{[I^j_{-},I^j_{+}]}(\theta^j(x), \vartheta^j) dx},
$$
where 
$$ (\vartheta^j)^2= \frac{1}{\tau^2 \|c^{j}\|^2_2},$$
and $\theta^j:\real\to \real$ is equal to
$$\theta^j(x)=  \frac{1}{\tau^2\|c^{j}\|^2_2} \left(\lambda e_j^{\tp}(X_{\cE}^{\tp} X_{\cE})^{-1} \cS -x\right).$$
\end{corollary}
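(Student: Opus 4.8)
The plan is to derive Corollary~\ref{exact:pivot:carving} as a direct specialization of Theorem~\ref{thm:main}: under the carving scheme of \eqref{randomization:carving}--\eqref{randomized:LASSO:eg} we have $\Omega = \tau^2 X^\tp X$ and $\epsilon = 0$, so it suffices to substitute these choices into the matrices $P,Q,R,T$ of \eqref{PQRT} and into the constants $\Theta$, $P^{j}$, $r^{j}$, $(\vartheta^j)^2$, $(\sigma^j)^2$, $\lambda^j$, $\zeta^j$ and $\theta^j$ of Theorem~\ref{thm:main}, then simplify. Three elementary facts drive every simplification: the columns of $X_\cE$, and hence $c^{j}= X_\cE(X_\cE^\tp X_\cE)^{-1}e_j$, lie in the column space of $X$, so the hat matrix $H = X(X^\tp X)^{-1}X^\tp$ fixes $c^{j}$ and obeys $X_\cE^\tp H = X_\cE^\tp$; the vector $c^{j}$ is orthogonal to $\h{\Gamma}^{j}$, so $(c^{j})^\tp \h{\Gamma}^{j} = 0$; and, with active coordinates stacked first, $X_\cE^\tp X (X^\tp X)^{-1} = [\,I_{|\cE|}\ \ 0_{|\cE|,\,p-|\cE|}\,]$.

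First I would record the basic substitutions. With $\epsilon = 0$ we have $Q = X^\tp X_\cE$, so $Q^\tp \Omega^{-1} Q = \tau^{-2} X_\cE^\tp H X_\cE = \tau^{-2} X_\cE^\tp X_\cE$ and $\Theta = \tau^2 (X_\cE^\tp X_\cE)^{-1}$, while $P^{j} = -\|c^{j}\|_2^{-2} X^\tp c^{j}$. Multiplying out, $r^{j} = Q^\tp \Omega^{-1} P^{j} = -\tau^{-2}\|c^{j}\|_2^{-2}\, X_\cE^\tp H c^{j} = -\tau^{-2}\|c^{j}\|_2^{-2}\, X_\cE^\tp c^{j} = -\tau^{-2}\|c^{j}\|_2^{-2}\, e_j$, which is proportional to $e_j$; this is precisely why the conditioning event of Proposition~\ref{Lem:cond} reduces to an interval for the single coordinate $O_j$, as asserted in the surrounding text. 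Since $[(X_\cE^\tp X_\cE)^{-1}]_{jj} = \|c^{j}\|_2^2$, it follows that $(\vartheta^j)^2 = (r^{j})^\tp \Theta r^{j} = (\tau^2\|c^{j}\|_2^2)^{-1}$, the first displayed identity of the corollary. The same projection fact gives $(P^{j})^\tp \Omega^{-1} P^{j} = \tau^{-2}\|c^{j}\|_2^{-4}(c^{j})^\tp H c^{j} = (\tau^2\|c^{j}\|_2^2)^{-1}$, so the last two terms inside $(\sigma^j)^2$ cancel, leaving $(\sigma^j)^2 = \sigma^2\|c^{j}\|_2^2$ and $\lambda^j = 1$; hence the Gaussian factor of Theorem~\ref{thm:main} collapses to $\phi\big((\sigma\|c^{j}\|_2)^{-1}(x - \beta^{\cE}_j - \zeta^j)\big)$.

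The only step that needs an idea rather than bookkeeping is showing $\zeta^j = 0$, i.e.\ that $\Lambda(\h{\Gamma}^{j},\cU) = (r^{j})^\tp \Delta(\h{\Gamma}^{j},\cU)$. The key observation is that $X^\tp c^{j} = Q(X_\cE^\tp X_\cE)^{-1}e_j$, so $P^{j}$ lies in the column span of $Q$ and is therefore fixed by the oblique projector $Q\Theta Q^\tp \Omega^{-1}$; equivalently $(P^{j})^\tp \Omega^{-1} = (r^{j})^\tp \Theta Q^\tp \Omega^{-1}$, which is exactly the identity $\Lambda(\,\cdot\,,\cU) \equiv (r^{j})^\tp \Delta(\,\cdot\,,\cU)$ as functions of $y$ for fixed $\cU$. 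Concretely, $\Theta Q^\tp \Omega^{-1} = (X_\cE^\tp X_\cE)^{-1}[\,I_{|\cE|}\ \ 0\,]$ and $(P^{j})^\tp \Omega^{-1} = -\tau^{-2}\|c^{j}\|_2^{-2}[\,e_j^\tp(X_\cE^\tp X_\cE)^{-1}\ \ 0\,]$, and since $Py + R\cU + T$ has active block $-X_\cE^\tp y + \lambda\cS$ and inactive block $-X_{\cE^c}^\tp y + \lambda\cU$, both $\Lambda(y,\cU)$ and $(r^{j})^\tp\Delta(y,\cU)$ collapse to $\tau^{-2}\|c^{j}\|_2^{-2}\big(-(c^{j})^\tp y + \lambda\, e_j^\tp(X_\cE^\tp X_\cE)^{-1}\cS\big)$. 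Evaluating at $y = \h{\Gamma}^{j}$ and using $(c^{j})^\tp \h{\Gamma}^{j} = 0$ makes the difference vanish, so $\zeta^j = 0$; the same computation yields $(r^{j})^\tp\Delta(\h{\Gamma}^{j},\cU) = \tau^{-2}\|c^{j}\|_2^{-2}\lambda\, e_j^\tp(X_\cE^\tp X_\cE)^{-1}\cS$, whence $\theta^j(x) = (r^{j})^\tp\Delta(\h{\Gamma}^{j},\cU) - (\vartheta^j)^2 x = \tau^{-2}\|c^{j}\|_2^{-2}\big(\lambda\, e_j^\tp(X_\cE^\tp X_\cE)^{-1}\cS - x\big)$, the second displayed identity of the corollary. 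Plugging $(\vartheta^j)^2$, $(\sigma^j)^2$, $\lambda^j$, $\zeta^j$ and $\theta^j$ into the pivot of Theorem~\ref{thm:main}, while leaving the truncation limits $I^j_{-},I^j_{+}$ of Proposition~\ref{Lem:cond} intact, gives the claimed expression. The whole difficulty is thus concentrated in the $\zeta^j=0$ step, which rests on the structural fact $P^{j}\in\operatorname{col}(Q)$; everything else is routine manipulation of projection matrices.
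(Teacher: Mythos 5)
Your proposal is correct and follows essentially the same route as the paper: the paper's proof simply asserts that under $\Omega=\tau^2 X^{\tp}X$ one has $r^j=-\tfrac{1}{\tau^2\|c^j\|_2^2}e_j$, $\Theta Q^{\tp}\Omega^{-1}=\bigl[(X_{\cE}^{\tp}X_{\cE})^{-1}\ \ 0\bigr]$, $\lambda^j=1$, $\zeta^j=0$, and $(\sigma^j)^2=\sigma^2\|c^j\|_2^2$, and you supply exactly the projection-matrix algebra verifying these identities before substituting into Theorem \ref{thm:main}. The only cosmetic remark is that your identity $\Lambda(y,\cU)\equiv(r^j)^{\tp}\Delta(y,\cU)$ already gives $\zeta^j=0$ for any $y$, so the orthogonality $(c^j)^{\tp}\h{\Gamma}^j=0$ is needed only for the explicit form of $\theta^j(x)$, not for the vanishing of $\zeta^j$.
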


Upon revisiting our toy example, 
we recall that the polyhedral method truncates the Gaussian distribution of $\widehat{\beta}^{\cE}_j$ to the interval 
$[H_{-}^{j}, H_{+}^{j}]$ for $j\in \{1,2\}$.
In contrast, our new pivot replaces the indicator function $1_{[H_{-}^j, H_{+}^j]}(x)$ with the Gaussian probability
$$
\text{\normalfont TP}^{[I^j_{-},I^j_{+}]}(\theta^j(x), \vartheta^j)
$$
in the integrand of \eqref{Lee:pivot}.

\begin{remark}
Of course, solving \eqref{randomized:LASSO:eg}    is not exactly the same as applying the LASSO on a subsample of size $n_1$. 
If selection is carried out on a randomly selected subsample, then our pivot would provide asymptotic selective inference rather than exact, due to the asymptotic equivalence between the Gaussian randomization and selection on the subsample.
Since our current focus is on providing exact guarantees for selective inference, we defer a formal proof of this to future work.
\end{remark}
 
\subsection{Choice of conditioning}
\label{sec3.3}

We come back to our conditioning event in Proposition \ref{Lem:cond}.

Denote by 
\begin{equation}
\left(L_{\alpha}^{j,\cG}, U_{\alpha}^{j,\cG}\right)
\label{CI:z}
\end{equation}
the confidence interval for $\beta^{\cE}_j$ if we had based inference on the conditional distribution of $\h{\beta}^{\cE}$, given the event $\{G=\cG\}$ as done by the MLE method.
In principle, we can fix any arbitrary vector $\eta\in \real^{|\cE|}$ and further condition on 
\begin{equation}
A^{\eta}= \left(I -\frac{1}{\eta^{\tp}\Theta \eta}\Theta \eta \eta^{\tp}\right)O.
\label{A:eta}
\end{equation}

By writing 
$$
O = \frac{\Theta\eta}{\eta^{\tp}\Theta \eta} \eta^{\tp} O + A^{\eta},
$$
our conditioning event simplifies to an interval as:
\begin{equation*}
\begin{aligned}
\left\{ G=\cG, A^{\eta}=\cA^{\eta}\right\} &= \left\{ LO < M, U=\cU,  A^{\eta}=\cA^{\eta}\right\} \\
&= \left\{I^{\eta}_{-} <  \eta^{\tp} O < I^{\eta}_{+}, U=\cU,  A^{\eta}=\cA^{\eta}\right\},
\end{aligned}
\end{equation*}
where $I^{\eta}_{-}$ and $I^{\eta}_{+}$ now depend on $L$, $M$, $\Theta$, and $\cA^{\eta}$.

If we follow the same steps as before, then we can obtain an exact pivot for $\beta^{\cE}_j$ by using a truncated distribution that is supported on $\real\times [I^{\eta}_{-},I^{\eta}_{+}]$. 
If we let $\eta=r^j$, it leads to the conditioning event in Proposition \ref{Lem:cond} and to the proposed pivot.

Now we address choosing $\eta$, which determines the additional conditioning information.
Consider a situation when selection has no impact, i.e., the truncated distribution is no different from the usual distribution with no further adjustment for selection. 
Our specific choice $\eta=r^{j}$ is motivated from the fact that no extra price is paid by conditioning on $A^{r^j}$ in the situation described above.
In other words, the confidence intervals produced by our pivot
$$\left\{ \left(L^{j}_{\alpha}, U^{j}_{\alpha}\right): j\in \cE \right\}$$
narrow down to the intervals in \eqref{CI:z} as selection has a diminishing impact.
We formalize this fact in Proposition \ref{choice:cond}.

\begin{proposition}
Let $A^{\eta}$ be defined according to \eqref{A:eta}.
Then, we have 
\begin{equation*}
\begin{aligned}
\text{\normalfont Var}\left(\h{\beta}_j^{\cE} \ \Big\lvert \ U= \cU, A^{r^j}= \cA^{r^j}, \h{\Gamma}^j= g\right)&= \text{\normalfont Var}\left(\h{\beta}_j^{\cE} \ \Big\lvert \ U= \cU, \h{\Gamma}^j= g\right)\\
&= \underset{\eta}{\text{maximum}} \ \text{\normalfont Var}\left(\h{\beta}_j^{\cE}\ \Big\lvert \ U= \cU, A^{\eta}= \cA^{\eta}, \h{\Gamma}^j= g\right).
\end{aligned}
\end{equation*}
\label{choice:cond} 
\end{proposition}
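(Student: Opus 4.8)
The plan is to reduce everything to the joint Gaussian law of $(\h{\beta}^{\cE}_j, O)$ under the stated conditioning and then read off a single cross-covariance. Throughout I would treat the realized selection data $\cE$, $\cS$ as fixed, so that $P,Q,R,T,c^j,\Theta,r^j,Q^j$ are constants, and—as in Proposition~\ref{choice:cond}—work with the Gaussian reference law obtained by conditioning only on the stated \emph{equalities} $\{U=\cU\}$ and $\{\h{\Gamma}^j=g\}$ (no sign inequality $\{LO<M\}$ is imposed). Every law below is then genuinely Gaussian, so the elementary fact ``conditioning on a jointly Gaussian variable cannot increase variance, with equality iff the covariance vanishes'' is available.

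\textbf{Step 1 (the joint law).} I would use the stationarity identity $w=Py+QO+RU+T$ to change variables from $w$ to $(O,U)$ (constant Jacobian), condition on $U=\cU$, and then use that $\h{\beta}^{\cE}_j=(c^j)^{\tp}y$ and $\h{\Gamma}^j$ are uncorrelated linear functionals of $y$, hence—being jointly Gaussian—independent; on $\{\h{\Gamma}^j=g\}$ one has $y=g+\|c^j\|_2^{-2}c^j\,\h{\beta}^{\cE}_j$ with $(c^j)^{\tp}g=0$. Substituting and writing $v=Pg+R\cU+T$, the conditional density of $(\h{\beta}^{\cE}_j,O)$ given $\{U=\cU,\ \h{\Gamma}^j=g\}$ is proportional to $\phi(\h{\beta}^{\cE}_j;(c^j)^{\tp}\mu,\sigma^2\|c^j\|_2^2)$ times $\exp(-\tfrac12(v+P^{j}\h{\beta}^{\cE}_j+QO)^{\tp}\Omega^{-1}(v+P^{j}\h{\beta}^{\cE}_j+QO))$, which is jointly Gaussian. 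Isolating the part depending on $O$ for a fixed value of $\h{\beta}^{\cE}_j$ and using $Q^{\tp}\Omega^{-1}Q=\Theta^{-1}$ and $Q^{\tp}\Omega^{-1}P^{j}=r^{j}$ gives
\[
\Ee\!\left[O \,\middle|\, \h{\beta}^{\cE}_j,\, U=\cU,\, \h{\Gamma}^j=g\right]=\Delta(g,\cU)-(\Theta r^{j})\,\h{\beta}^{\cE}_j,
\]
with conditional covariance $\Theta$, which does not depend on $\h{\beta}^{\cE}_j$.

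\textbf{Step 2 (the two equalities).} Applying $A^{r^j}=(I_{|\cE|}-Q^{j}(r^{j})^{\tp})O$ to the display and invoking the identity $Q^{j}(r^{j})^{\tp}\Theta r^{j}=\Theta r^{j}$ (immediate from $Q^{j}=(\vartheta^j)^{-2}\Theta r^{j}$ and $(\vartheta^j)^2=(r^{j})^{\tp}\Theta r^{j}$), the coefficient of $\h{\beta}^{\cE}_j$ in $\Ee[A^{r^j}\mid\h{\beta}^{\cE}_j,U=\cU,\h{\Gamma}^j=g]$ collapses to $-(I-Q^{j}(r^{j})^{\tp})\Theta r^{j}=0$, while the conditional covariance of $A^{r^j}$ is again constant in $\h{\beta}^{\cE}_j$. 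Hence $A^{r^j}$ is conditionally independent of $\h{\beta}^{\cE}_j$ given $\{U=\cU,\h{\Gamma}^j=g\}$, which gives the first equality. For the second, any $\eta$ with $\eta^{\tp}\Theta\eta\neq0$ makes $A^{\eta}$ a linear image of $O$, so $(\h{\beta}^{\cE}_j,A^{\eta})$ is jointly Gaussian under the base conditioning and $\Var(\h{\beta}^{\cE}_j\mid U=\cU,A^{\eta}=\cA^{\eta},\h{\Gamma}^j=g)\le\Var(\h{\beta}^{\cE}_j\mid U=\cU,\h{\Gamma}^j=g)$; Steps 1 and 2 show this bound is attained at $\eta=r^{j}$, so the maximum over $\eta$ equals the middle quantity.

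\textbf{Main obstacle.} The only non-routine part is Step 1: executing the change of variables $w\mapsto(O,U)$ while simultaneously conditioning on $U=\cU$ \emph{and} on the projection $\h{\Gamma}^j=g$, and verifying that the nuisance directions of $y$ and $w$ drop out so that the cross-term coupling $\h{\beta}^{\cE}_j$ and $O$ is exactly $r^{j}$ (equivalently, that $\Ee[O\mid\h{\beta}^{\cE}_j,\dots]$ is affine in $\h{\beta}^{\cE}_j$ with slope $-\Theta r^{j}$). Once that joint law is in hand, the proposition follows from the single algebraic identity $Q^{j}(r^{j})^{\tp}\Theta r^{j}=\Theta r^{j}$ together with variance-monotonicity of Gaussian conditioning.
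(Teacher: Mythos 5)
Your proposal is correct and follows essentially the same route as the paper: the heart of both arguments is that the conditional law of $A^{r^j}$ given $\h{\beta}_j^{\cE}$, $U=\cU$, $\h{\Gamma}^j=g$ is free of $\h{\beta}_j^{\cE}$ because $\bigl(I-Q^j(r^j)^{\tp}\bigr)\Theta r^j=0$ (the paper's identity \eqref{simplify:1}, stated there as equality of conditional densities), which yields the first equality, with the maximality over $\eta$ then following from variance monotonicity of Gaussian conditioning. The only difference is presentational—you work with conditional Gaussian means and covariances while the paper factorizes densities—and you spell out the variance-monotonicity step for the second equality that the paper leaves implicit.
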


It is worth noting that there might be other ways to choose the direction $\eta$. 
One such option is to choose $\eta$ in a way that minimizes the variance of the  bivariate truncated distribution that arises when we condition on $\left\{ G=\cG, A^{\eta}=\cA^{\eta}\right\}$. 
While this approach seems ideal, it is not straightforward as the resulting optimization is not convex in $\eta$ and may not be easily solvable. 

Another option is to condition on all active LASSO coefficients, except for the $j^{\text{th}}$ one, when inferring the effect of the $j^{\text{th}}$ selected variable.
 However, this choice will not generalize well to other models post selection. 
 For example, if we add a new variable $X^{*}$ to the selected model and fit it using the features $E\cup \{X^*\}$, it is unclear what to condition on when inferring for the effect of $X^{*}$ in this selected model.
    
In contrast, our approach to choosing $\eta$ is simple yet principled, which applies broadly to Gaussian linear models with our form of additive randomization introduced at the selection step.

\section{More examples}
\label{sec:examples}

\subsection{General setup}
The randomized LASSO serves as our first concrete instance in the paper.
Our pivot serves as a generally applicable framework to other Gaussian regression models. 
We introduce the general setup and provide further examples in this section.

Suppose that we solve
\begin{equation}
    \underset{b \in \real^p}{\text{minimize}}\ \  \ell(b; d) + P_{\lambda}(b) - w^{\tp}b,
    \label{randomized:alg}
\end{equation}
for $w\sim \mathcal{N}(0, \Omega)$.
For example, letting $d= (y, X)$, and fixing  
$$\ell(b; y, X)= \frac{1}{2}\|y-Xb\|^2_2 + \frac{\epsilon}{2}\|b\|_2^2, \text{ and } P_{\lambda}(b)= \lambda \|b\|_1$$
gives us the randomized LASSO in Section \ref{sec:main}. 

As before, we begin by conditioning on a proper subset of the event 
$\{E=\cE\}$, which we denote by $\{G=\cG\}$.
Denote the KKT conditions of stationarity for \eqref{randomized:alg} by
\begin{equation}
    w= \grad \ell(V; d) + \partial P_{\lambda}(V)
    \label{KKT}
\end{equation}
where 
$$V=\begin{pmatrix} O \\ U \end{pmatrix} \in \real^p$$
represent $p$ optimization variables at the solution of the randomized selection algorithm.

Our general setup for selective inference relies on two basic assumptions.
First, the stationarity conditions in equation \eqref{KKT} can be represented as:
\begin{equation}
w = P\widehat{d} + Q\cO + R\cU +T,
\label{KKT:affine}
\end{equation}
where $\widehat{d}$ is a statistic based on the data $d$. 
In other words, the stationarity conditions admit a linear representation in our optimization variables.
Second, we assume that the conditioning event can be expressed as:
\begin{equation}
\left\{ G = \cG \right\} = \left\{ LO < M, \ U = \cU \right\},
\label{cond:affine}
\end{equation}
meaning that the event $\left\{ G = \cG \right\}$ is equivalent to imposing linear constraints on our optimization variables.

We use the linear representations in equations \eqref{KKT:affine} and \eqref{cond:affine} to construct our pivot after conditioning on the additional information $A^{r^j}=\cA^{r^j}$.
We turn to more examples for illustrating our pivot.

\subsection{Revisiting the randomized LASSO}
We provide an alternate pivot which begins by conditioning on the event considered by \cite{exact_lasso}.
That is, let 
\begin{equation}
\left\{G=\cG\right\}= \left\{E=\cE, S =\cS\right\},
\label{Lee:cond}
\end{equation}
which means that we condition on the set of selected features along with the signs of their corresponding LASSO coefficients.
We note that this event can be represented as
$$
\{LV<M\},
$$
where 
$$
V= \begin{pmatrix} O \\ U\end{pmatrix} \in \real^p, \ L=\begin{bmatrix} -\text{diag}(\cS) & 0_{|\cE|, p-|\cE|}\\ 0_{p-|\cE|, |\cE|} & I_{p-|\cE|}\\  0_{p-|\cE|, |\cE|} & -I_{p-|\cE|}\end{bmatrix},\  M=\begin{pmatrix}0_{|\cE|}\\ 1_{p-|\cE|} \\ 1_{p-|\cE|} \end{pmatrix}.
$$
Let $\cO \in \real^p$ be the realized value of $V$.

The KKT conditions of stationarity in this example are given by
\begin{equation*}
w= Py+  Q\cO +T,
\end{equation*}
for
$$P=-X^{\tp}, \; Q= \begin{bmatrix} X_{\cE}^{\tp} X_{\cE} + \epsilon I_{|\cE|} & 0_{|\cE|, p-|\cE|} \\ X_{\cE^c}^{\tp} X_{\cE} & I_{p-|\cE|} \end{bmatrix}, \; T= \begin{pmatrix} \lambda \cS \\ 0_{p-|\cE|}\end{pmatrix}.$$
Clearly, the two linear representations in \eqref{KKT:affine} and  \eqref{cond:affine} are met. 

Proceeding as before, we condition further on $A^{r^j}$.
This allows us to reduce our conditioning event to a single linear constraint in our optimization variables $V \in \real^p$.
The main difference with the pivot in Section \ref{sec:main} is that we start with a different conditioning event, which matches with the conditioning event used in the polyhedral method.

\subsection{Randomized screening of correlations}

Suppose that we screen features using their marginal correlations with the outcome.
For a fixed threshold $\lambda\in \real^{+}$, a randomized screening procedure selects features which satisfy
$$|X_j^{\tp} y + w_j| >\lambda,$$
for $w\in \mathcal{N}(0_p, \Omega)$.
Let $E$ denote the set of selected features.
Equivalently, this selection can be written as 
\begin{equation}
   \underset{b \in \real^p}{\text{minimize}}\ \frac{1}{2}\|b-X^{\tp}y\|_2^2 + \chi_{K_{\lambda}}(b) -w^{\tp} b,
\label{randomized:screening:eg}   
\end{equation}
where 
$$K_{\lambda}=\{o: \|o\|_{\infty} < \lambda\}, \text{ and } \chi_{K_{\lambda}}(b)= 
\begin{cases} 
      0 & \text{ if } b\in K_{\lambda} \\
      \infty &  \text{otherwise}. 
\end{cases}
$$

We define our optimization variables as follows. 
First, let $O\in \real^{|E|}$ collect the active components of the subgradient for the penalty at the solution. 
Define $U= |X_{E^c}^{\tp} y + w_{E^c}| \in \real^{p-|E|}$ and $S=\text{Sign}(X_{E}^{\tp} y + w_{E})$.
Consider the conditioning event
$$
\{G=\cG\} =\{ E =\cE, S= \cS, U= \cU\}.
$$
We note that the representation in \eqref{KKT:affine} is satisfied with 
$$P=-X^{\tp}, \; Q= \begin{bmatrix} I_{|\cE|} \\ 0_{p-|\cE|, |\cE|}  \end{bmatrix}, \; R= \begin{bmatrix} 0_{|\cE|, p-|\cE|} \\ I_{p-|\cE|} \end{bmatrix},\; T= \begin{pmatrix} \lambda \cS \\ 0_{p-|\cE|}\end{pmatrix},$$
for $\widehat{d}= y$.
It is also easy to see that the event $\{G=\cG\}$ satisfies \eqref{cond:affine} by letting
$$
L= -\text{diag}(\cS),\ M= 0_{|\cE|}.
$$
We condition on the event in Proposition \ref{Lem:cond} as before and obtain our pivot for exact selective inference.

\subsection{Randomized SLOPE}

In this example, we consider solving a randomized version of the SLOPE algorithm in \cite{bogdan2015slope}.
The randomized SLOPE, for $w\sim \mathcal{N}(0, \Omega)$, is given by
\begin{equation}
   \underset{b \in \real^p}{\text{minimize}}\ \frac{1}{2}\|y-Xb\|_2^2 + \sum_{j=1}^p\lambda_j |b|_{[j]} -w^{\tp} b,
\label{randomized:SLOPE:eg}   
\end{equation}
where
$$|b|_{[1]} \geq |b|_{[2]} \geq \cdots \geq |b|_{[p]}$$
denote the magnitudes (absolute values) of entries of $b$ in decreasing order.
For simplicity sake, we assume that the $p$ tuning parameters $\lambda_j$ are unique, and let
$$
\Lambda = \begin{pmatrix} \lambda_1 & \cdots & \lambda_p \end{pmatrix}. 
$$

Fixing some notations, we let $O\in \real^q$ collect the magnitudes of the distinct, nonzero components of the SLOPE solution.
Without losing generality, we let
$$
O_1 > O_2 >\cdots>O_q. 
$$
The collection of selected features with an estimated SLOPE coefficient equal to $O_k$, in magnitude, is denoted by $C_k$.
Let the indices of the features in $C_k$ be $I_k\subseteq [p]$, and let the size of this collection be equal to $|C_k|$.
Denote by $C_0$ the collection of features which are not selected by the randomized SLOPE, and let $I_0$ be their corresponding indices.
For $k\in 0\cup [q]$, let $U'_k \in \real^{|C_k|}$ collect the entries of the subgradient for the penalty that are present in the set $I_k$.
For $k\geq 1$, we drop the smallest component of $U'_k$ which we denote by $s'_k$, and call the resulting subvector $U_k\in \real^{|C_k|-1}$.
Then, let 
$$U\in \real^{p-q} = \begin{pmatrix}  U_1 \\ \vdots \\ U_k \\ U_0' \end{pmatrix}, \ S'\in \real^q = \begin{pmatrix}  s'_1 \\ \vdots \\ s'_q  \end{pmatrix}$$
At last, we let the signs of the selected features in $C_k$ be $S_k$, and then fix 
$\bar{X}_k = \sum_{j\in C_k} \text{diag}(S_k) X_k$, and let
$$
X_0 = \begin{bmatrix} \bar{X}_1 & \cdots \bar{X}_q\end{bmatrix}.
$$

With these notations, it is easy to note that the representation in \eqref{KKT:affine} holds with 
$$P=-X^{\tp}, \; Q=X^{\tp} X_0, \;R= \begin{bmatrix} 0_{q, p-q} \\ I_{p-q} \end{bmatrix}, \; T= \begin{pmatrix} S' \\ 0_{p-q} \end{pmatrix}$$
for $\widehat{d}=y$.
Now if $G$ is the subgradient of the SLOPE penalty, then we observe that \eqref{cond:affine} is satisfied by letting $L\in \real^{q\times q}$ be a matrix of all zeroes except for the entries
$$L_{i,i}=-1, \ L_{i,i+1}=1, \text{ for } i\in [q-1], \text{ and } L_{q,q}=-1,$$ and 
$M= 0_{q}$.

\subsection{Selective reporting with bootstrapped data}

Bootstrapping is a commonly used statistical technique that estimates the variance of an estimator or prediction error. 
It is also used to construct confidence intervals for unknown parameters. 
However, researchers may choose to report inferences for only a selected subset of these parameters, based on the magnitude of their estimators or their statistical significance. 
This is known as selective reporting and can invalidate inferences for the chosen parameters.

To address this issue, we use the framework of (approximate) penalized Gaussian regression to cast the problem of selective reporting after bootstrapping. 
With this approach, we can easily apply our pivot to conduct selective inference with our bootstrapped samples. 
We assume that $\widehat\beta\in \mathbb{R}^p$ is our estimator and $\beta \in \mathbb{R}^p$ is the vector of unknown parameters.
Let 
$$\left\{\widehat\beta^{(b)}: b\in [B]\right\}$$ 
be the collection of estimators that are re-computed on $B$ bootstrapped samples.
We assume that $\widehat\beta -\beta \approx \mathcal{N}(0_p, \Sigma)$, and bootstrapping works in the sense that
$$ \widehat\beta^{(b)} - \widehat\beta \approx \widehat\beta -\beta$$
in distribution, for $b\in [B]$, and that the two variables are independent.
Formally, this implies that we can approximate the distributions of $\widehat\beta -\beta$ and $\widehat\beta^{(b)} - \widehat\beta$ by independent Gaussian distributions, with mean $0_p$ and covariance $\Sigma$.
Hereafter, we use the above-stated Gaussian distribution to model the two variables.

We begin by considering two independent estimators
$$\widetilde\beta = \widehat\beta + \alpha (\widehat\beta^{(b)} - \widehat\beta ), \ \widetilde\beta^\perp = \widehat\beta -\frac{1}{\alpha} (\widehat\beta^{(b)} - \widehat\beta )$$
for $\alpha>0$ and a randomly chosen bootstrapped estimator $\widehat\beta^{(b)}$ for $b \in [B]$.
One may now select a subset of relevant parameters to report using $\widetilde\beta$, and use the independent estimator $\widetilde\beta^\perp$ for selective inference.
Note that $\widehat\beta$ can be substituted with an estimator based on the bootstrapped samples in case there is no direct access to $\widehat\beta$. 
Similarly,  the Gaussian covariance matrix $\Sigma$ can be replaced with an estimator for the variance of $ \widehat\beta$.

The strategy of dividing an initial estimator $\widehat\beta$ into two independent estimators for $\beta$ can be viewed as being related to the proposals by \cite{rasines2021splitting} and \cite{leiner2021data}. 
We notice that $\widetilde\beta$ equals $\widehat\beta^{(b)}$ when $\alpha$ is equal to 1, meaning that we use a single bootstrapped sample in this case. 
By adopting the splitting method, we carry out selective inference using $\widetilde\beta^\perp = 2\widehat\beta - \widehat\beta^{(b)}$.

In our example, we propose to select a subset of parameters by solving
 \begin{equation}
        \underset{b\in \mathbb{R}^p}{\text{minimize}} \; \frac{1}{2} (\widetilde\beta - b)^\tp \Sigma^{-1} (\widetilde\beta - b) + \lambda \|b\|_1.
         \label{eq:bootlasso}
  \end{equation}
We observe that \eqref{eq:bootlasso} is equivalent to the optimization
$$
\underset{b\in \mathbb{R}^p}{\text{minimize}} \; \frac{1}{2} b^{\tp} \Sigma^{-1} b - b^\tp \Sigma^{-1}\widehat{\beta}  + \lambda \|b\|_1 - w^\tp b,
$$
where $w= \alpha \Sigma^{-1}  (\widehat\beta^{(b)} - \widehat\beta ) \sim N(0, \alpha^2 \Sigma^{-1})$.
Equivalently, selective reporting with an $\ell_1$-penalty can be recognized as the randomized LASSO in \eqref{randomized:LASSO} by letting
$$y= \Sigma^{-1/2}\widehat\beta \in \mathbb{R}^p,\; X = \Sigma^{-1/2} \in \mathbb{R}^{p\times p},\text { and } \epsilon=0.$$
In order to construct selective inference, we can make use of our pivot, which was previously described in the paper.
Note that our pivot will not only use $\widetilde\beta^\perp$, but also the estimator $\widetilde\beta$ to make inferences for the selected or reported entries of $\beta$.

\section{Simulations}
\label{sec:sims}

\subsection{Settings and modeling strategies}
To evaluate how well our pivot performs, we use data generated from a sparse Gaussian model given by:
\begin{equation}
y = X_{E^*}\beta_{E^*} + \epsilon.
\label{true:model}
\end{equation}
Here, $\epsilon \in\mathbb{R}^n$ is a vector of i.i.d Gaussian errors with mean $0$ and variance $\sigma^2$ and $E^* \subset [p]$ is a sparse support set for $\beta \in \mathbb{R}^p$.

We construct the feature matrix $X$ by drawing $n = 500$ samples from a $p=200$ dimensional Gaussian distribution $\mathcal{N} \left( 0_{p}, \Sigma \right)$ 
with 
$$
\Sigma_{ij}= 0.9^{ \lvert i -j \rvert}.
$$
Then, we simulate $y$ from the model in \eqref{true:model} with noise level $\sigma^2=3$ and $|E^*|$=5. 

We design two main settings to study how our method compares with previously proposed procedures in selective inference.
In our first setting, we vary the proportion of data used for model selection, also called ``Split Proportion".
We compare methods that use roughly the same amount of information for feature selection as data splitting at a prespecified value of split proportion.
We elaborate on this further when we describe the different methods under study.
In the second setting, we vary the signal strength of the non-zero entries of $\beta$ to investigate how different methods compare under varying signal regimes. 
Specifically, we set the magnitude of the nonzero entries for $\beta$ as $\sqrt{2f\log p}$.
We vary the fraction f in the set $\{0.50, 1, 1.5, 2, 3\}$, and number the corresponding settings as ``Signal Regimes $1-5$" in our plots.

In each setting, we consider two common modeling strategies.
\begin{enumerate}
    \item Full Model: \quad we model our response using the full set of features.
    In other words, we model our response as
    $$y \sim \mathcal{N}(X\beta, \sigma^2 I_n).$$
    We estimate the noise level in our data by using residuals based on a regression of $y$ against all $p$ features.  
        
  In each round of simulation, we select a sparse set of features  $E=\cE$. 
  We then consider inference for the selected coefficients in the full model. To be precise, our parameters, after selection, are:    
  $$
    \beta^{\cE} = (\beta_j: j\in \cE)^{\tp} \in \mathbb{R}^{|\cE|}.
    $$
     The vector $\beta^{\cE}$ contains entries of $\beta$ that are present in the selected set $\cE$.
  \item Selected Model: \quad we model our response using the selected set of features $\cE$.
  That is, we use the model
    $$y \sim \mathcal{N}(X_{\cE}\beta_{\cE}, \sigma^2 I_n).$$
    In this case, we estimate the noise level by using the residuals based on a regression of our response against the selected features.
   
    We infer for the partial regression coefficients in the selected model
    $$
    \beta^{\cE} =  (X_{\cE}^{\tp} X_{\cE})^{-1} X_{\cE}^{\tp} X_{E^*}\beta_{E^*} \in \real^{|\cE|},
    $$
    that are obtained by projecting the true mean $X_{E^*}\beta_{E^*}$ onto the subspace spanned by the selected features.
\end{enumerate}
In both models, we adopt a plug-in approach to estimate the noise variance. 
We comment on this approach below.
\begin{remark} 
The work by \cite{randomized_response} supports the use of a plug-in estimator for $\sigma$ as long as it is an consistent estimator of the true noise variance before selection.  
While we use the parametric form of the fitted model to obtain a plug-in estimator, the plug-in approach can be more general in principle. 
For example, one can estimate the noise variance by using nonparametric function estimation methods, which separates the task of error estimation from the precise parametric modeling of our response.
\end{remark}

Our reported findings are based on $500$ rounds of simulations for each pair of setting and modeling strategy.

\subsection{Methods}

We compare the following methods: 
\begin{enumerate}
    \item ``Exact": \quad our current method to conduct exact selective inference with Gaussian randomization after solving \eqref{randomized:LASSO};
    \item ``MLE": \quad the approximate maximum likelihood method reviewed in Section \ref{sec:background}; this method conducts selective inference with an approximate Gaussian pivot after selecting features through  \eqref{randomized:LASSO};
    \item ``Polyhedral+": \quad this method, introduced in \cite{liu2018more}, applies the standard LASSO algorithm for selecting features and then conducts inference for the selected coefficients in the Full Model by conditioning on strictly less information than the polyhedral method in \cite{exact_lasso};
    \item ``Split": \quad this method is based on data splitting; we divide the training data into two independent parts, using $n_1$ samples for model selection with the standard LASSO algorithm, which is followed by using the remaining samples for valid selective inference.
    \item ``UV": \quad this method uses the UV decomposition proposed by \cite{rasines2021splitting}, where selection and inference are conducted on two independent datasets; for a randomization variable $\tilde{w}\sim N(0_n, \sigma^2 f I_n)$, selection is conducted with the U-estimator $Y+ \tilde{w}$ as the train response and selective inference is conducted for the selected parameters using the V-estimator $Y-\frac{1}{f}\tilde{w}$ as the test response. 
\end{enumerate}

The two methods ``Exact" and ``MLE" are constructed under the Gaussian randomization scheme that was discussed in Section \ref{sec3.3}.
Specifically, we fix the randomization covariance as $\Omega = \tau^2 X^{\tp} X$
with
$$ 
\tau^2= \widehat{\sigma}^2 \cdot\frac{(n-n_1)}{n_1},
$$
where $\hat\sigma$ is the estimated noise level in our model.
Both these methods are compared to data splitting which uses $n_1$ samples for feature selection.
To implement the ``UV" method, we replace $\sigma$ by its estimated value under our model and to ensure fair comparisons, we set $f=\frac{n-n_1}{n_1}$ in our analysis.
We report comparsions of our method with the ``UV" method across different signal regimes.

\begin{remark}
In our simulations, we choose not to use the polyhedral method from \cite{exact_lasso}. 
This is because, on average across $500$ simulations, the interval lengths it produces are much longer than the other four methods we are using. 
In fact, the polyhedral method returns infinitely long interval estimates in every setting, which is consistent with the findings in \cite{kivaranovic2018expected}. 
\end{remark}

\begin{remark}
When considering the Full Model, we include a summary of the performance of the ``Polyhedral+'' method, along with the two randomized methods "Exact" and "MLE". 
The benefits of utilizing the entire dataset rather than dividing it into samples are significantly noticeable when applying the Full Model. 
Therefore, we exclude the split-based methods from our summary plots since they produce considerably longer intervals, on average, compared to the other methods.

For the Selected Model, we compare the four randomized methods used in our simulations. 
We note that the "Polyhedral+" method is designed to provide selective inference only under the Full Model and does not apply to the Selected Model.
\end{remark}

\subsection{Findings}

First, we evaluate the accuracy of feature selection by using 
$$\text{F}1 \; \text{score}= \dfrac{\text{true positives}}{\text{true positives} + \dfrac{1}{2}(\text{false positives} +\text{false negatives}) }$$
in our two main settings.

In the left panel of Figure \ref{fig:1}, we vary the split proportion $\rho= \frac{n_1}{n}$ at a fixed strength of signals while keeping the signal strength fixed. 
We use the randomized LASSO method to conduct feature selection with Gaussian randomization that corresponds to the prespecified split proportion $\rho$; 
``Exact" and ``MLE" provide inference for the effects of the features selected with this randomized version of the LASSO.
The standard implementation of the LASSO, which is used by "Polyhedral+", applies feature selection on the entire dataset, and is represented in the plot as ``Standard".
We note that the distribution of the $\text{F}1 \; \text{score}$ for the Gaussian randomization scheme closely resembles the randomization involved in the related ``Split" procedure. 
As expected, the accuracy of selection increases with higher values of split proportion, eventually matching the accuracy attained by ``Standard" on the full data.

In the right panel of Figure \ref{fig:1}, we fix the  split proportion at $0.80$ and vary our signal regimes in the set $1-5$. 
Consistent with expectations, the accuracy of feature selection increases as we strengthen the signals. 
Notably, all the methods used for feature selection perform almost equally well at a split proportion of $0.80$, which is consistent with the findings of the left panel of the plot.

\begin{figure}[h]
\begin{center}
\centerline{\includegraphics[width=1.\linewidth]{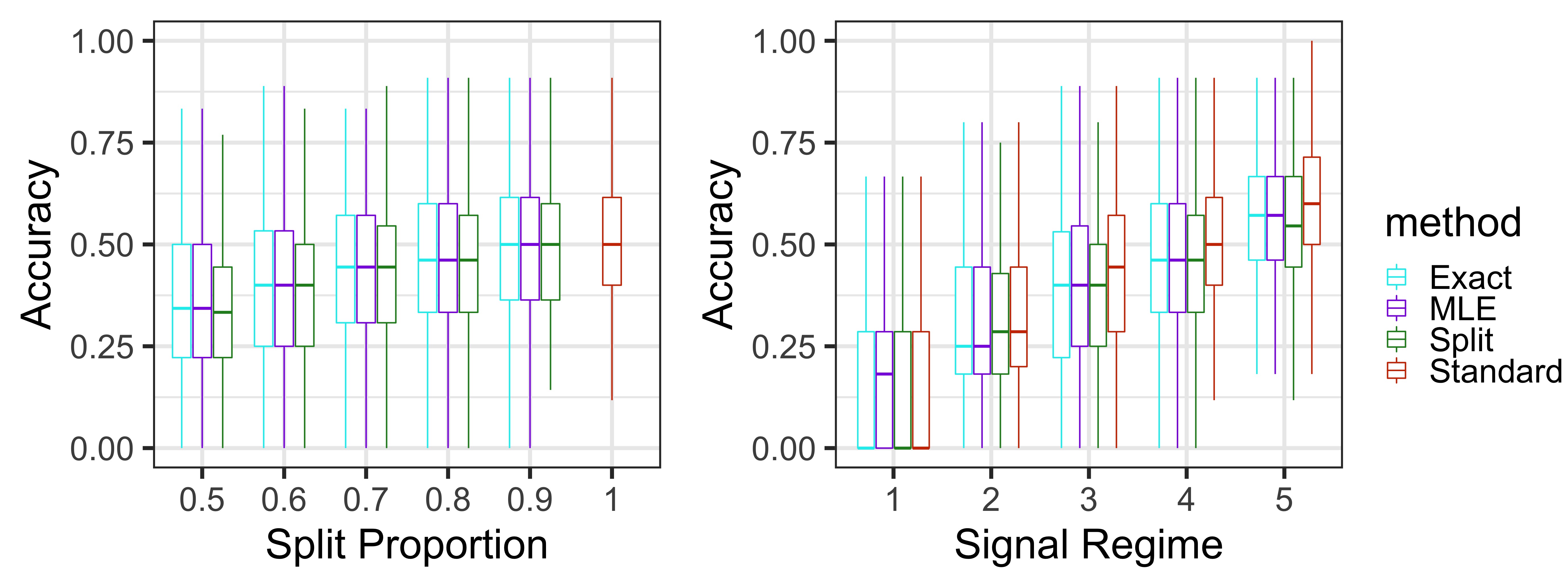}}
\end{center}
\vspace{-1cm}
\caption{Accuracy based on quality of feature selection. Left panel shows distribution of $\text{F}1 \; \text{score}$ at fixed Signal regime 3 as split proportion $\rho$ varies. Right panel shows distribution of $\text{F}1 \; \text{score}$ at fixed split proportion $\rho=0.80$, as signal regimes vary from $1-5$.}
\label{fig:1}
\end{figure} 

Next, we compute the false coverage rate of the confidence intervals for different methods, which is equal to
$$
\text{FCR}=  \dfrac{\left\{ j \in \cE: \beta^\cE_j \not\in \mathcal{C}^\cE_j\right\}}{\max(|\cE|,1)}. 
$$
In Figures \ref{fig:2} and \ref{fig:3}, we plot the coverage rates $1-\text{FCR}$ for $90\%$ confidence intervals under the two models, Full Model and Selected Model. 
The averaged coverage rate, over all replications, is highlighted by the dot mark.
The horizontal broken line at $0.90$ depicts the target coverage rate for all the methods.

We note that ``Exact" achieves the desired rate of coverage as do the previous methods of selective inference.  
This pattern remains consistent even as we change the split proportion or the strength of signals in different signal regimes.

\begin{figure}[H]
\begin{center}
\centerline{\includegraphics[width=1.\linewidth]{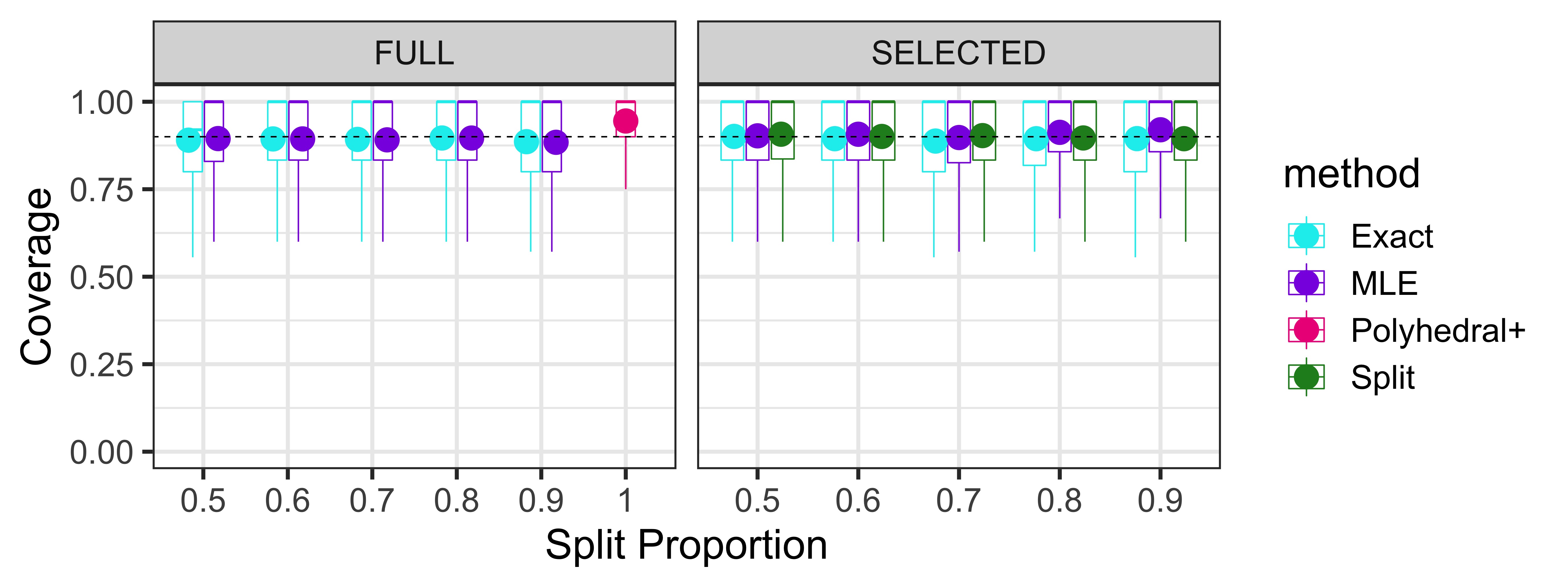}}
\end{center}
\vspace{-1cm}
\caption{Coverage rate of confidence intervals. Under Signal regime 3, left panel and right panel show distribution of coverage rates and the mean coverage over all $500$ replications in the Full and Selected Models, respectively.}
\label{fig:2}
\end{figure}

\begin{figure}[H]
\begin{center}
\centerline{\includegraphics[width=1.\linewidth]{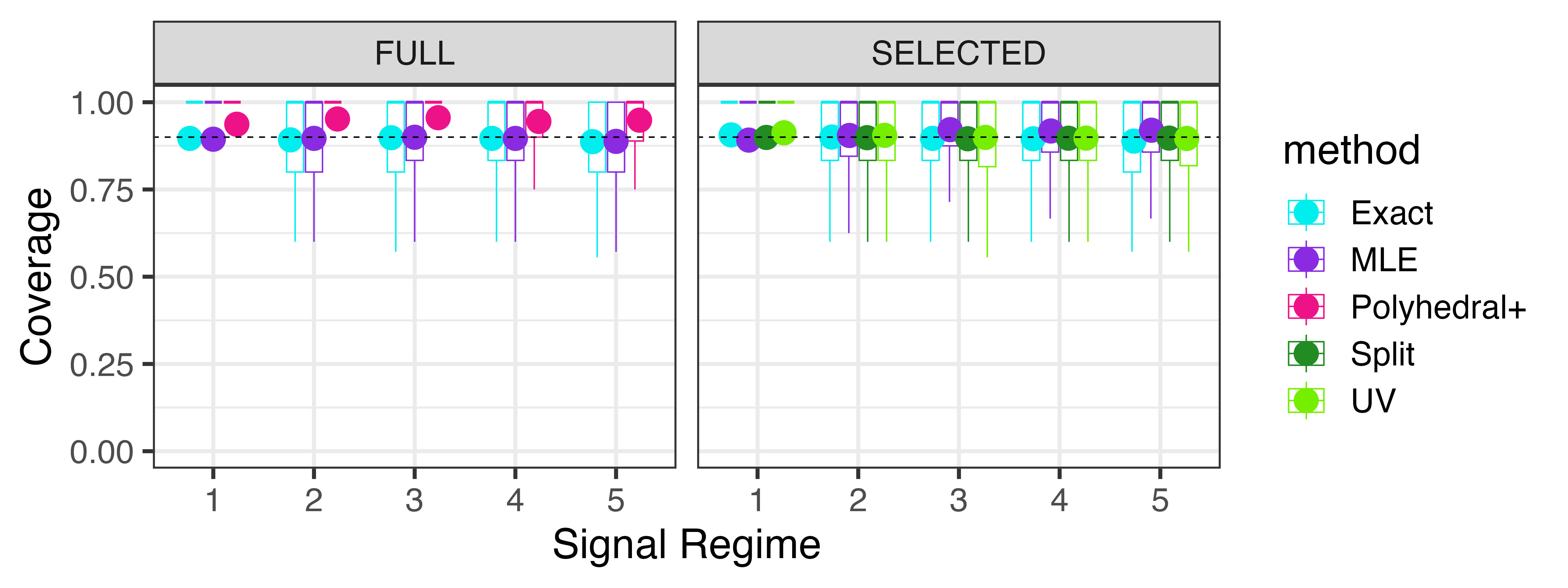}}
\end{center}
\vspace{-1cm}
\caption{Coverage rate of confidence intervals. At fixed split proportion $0.80$, left panel and right panel show distribution of coverage rates and the mean coverage over all $500$ replications in the Full and Selected Models, respectively.}
\label{fig:3}
\end{figure}

In Figures \ref{fig:4} and \ref{fig:5}, we investigate how the ``Exact" confidence intervals compare in length when we vary the split proportion and the strength of signals.

Under the Full Model, we observe that the interval lengths produced by ``Exact" and ``MLE" are consistently less variable than ``Polyhedral+". 
This observation is also true if we focus attention on split proportion $\rho=0.80$, at which the randomized methods are comparable with ``Polyhedral+" in terms of the quality of feature selection.

Similar patterns are seen in Figure \ref{fig:5} as we change the signal strengths under Signal Regimes 1-5. 
Under both models, we note that our ``Exact" method yields only nominally longer intervals than ``MLE", but, consistently gives shorter intervals than the two split-based strategies ``Split" and ``UV".
As previously mentioned, we only display the lengths of split-based methods for the Selected Model, as they are much longer than the other methods when used under the Full Model.
The increasing cost of discarding data from the selection stage is evident from the right panel of Figure \ref{fig:4}.

\begin{figure}[H]
\begin{center}
\centerline{\includegraphics[width=1.\linewidth]{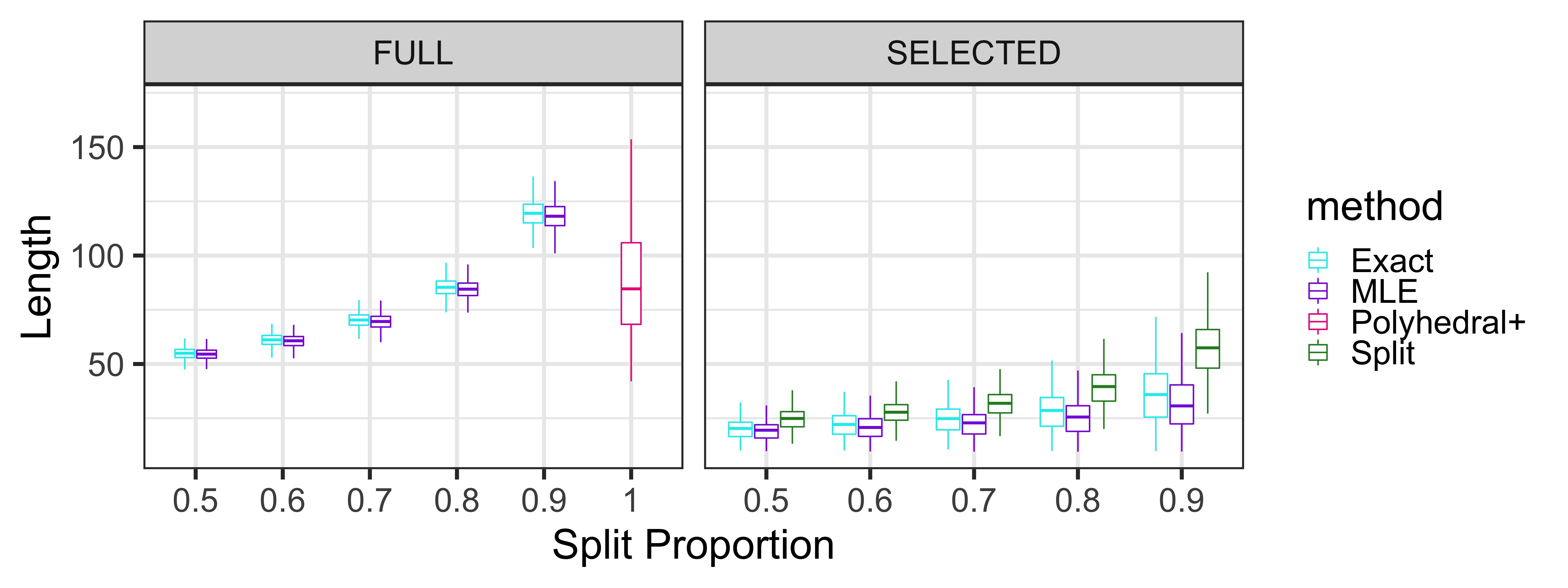}}
\end{center}
\vspace{-1cm}
\caption{Length of confidence intervals. Under Signal regime 3, left panel and right panel show distribution of lengths of confidence intervals over all $500$ replications in the Full and Selected Models, respectively.}
\label{fig:4}
\end{figure}

\begin{figure}[H]
\begin{center}
\centerline{\includegraphics[width=1.\linewidth]{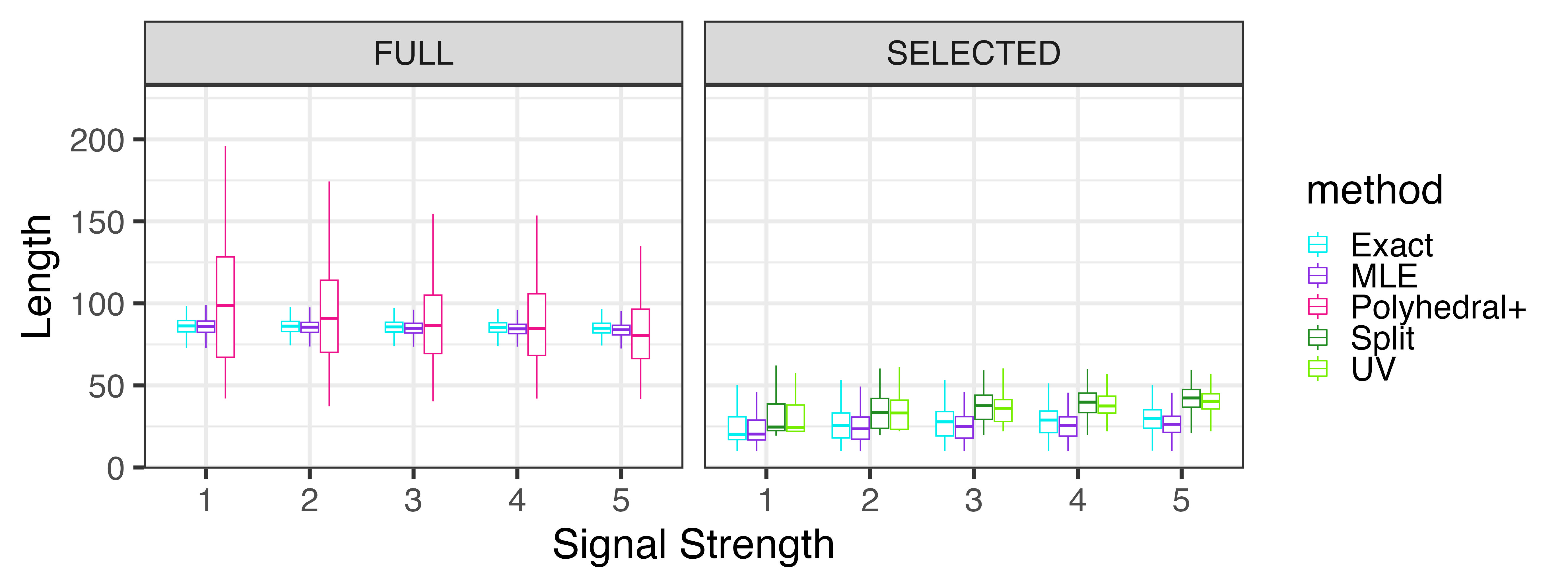}}
\end{center}
\vspace{-1cm}
\caption{Length of confidence intervals. At fixed split proportion $0.80$, left panel and right panel show distribution of lengths of confidence intervals over all $500$ replications in the Full and Selected Models, respectively.}
\label{fig:5}
\end{figure}

\section{Analysis of HIV drug resistance data}
\label{sec:real}

We apply our method to the HIV drug resistance data.
This dataset, originally analyzed by \cite{rhee2006genotypic}, is publicly available on the Stanford HIV Database (HIVDB).
The goal of the analysis is to find associations between mutations of the HIV virus and drug resistance to antiretroviral drugs.
We extract a part of this dataset that focuses on the response to one particular drug, Lamivudine (3TC), as has been described previously by \cite{bi2020inferactive, selective_bayesian}. 
The predictive features in this data are $91$ mutations that appeared more than $10$ times in the samples, and the response is a log-transformed value of the measurement for drug resistance.
Our dataset contains $633$ sample observations for the response and the set of $91$ features.

We focus on three randomized procedures for interval estimation.
To run our method, we consider drawing a Gaussian randomization variable $w \sim \mathcal{N}(0_p, \Omega)$, where $$\rho= \frac{n_1}{n}=0.8,$$ and $\Omega$ is set as per \eqref{randomization:carving}.
We implement the randomized LASSO with the randomization variable $w$.
The randomized LASSO selects a subset of $14$ mutations.
At the inference stage, we use our exact pivot to construct confidence intervals for the selected regression coefficients; our method is called ``Exact". 
For comparison, we construct approximate confidence intervals using ``MLE" after the same run of the randomized LASSO.
We also consider the intervals produced by ``Split" based on $\rho=0.8$. That is, ``Split" uses $80\%$ of the data samples for selecting features, and this resulted in selecting a subset of $17$ features. 
The remaining $20\%$ of the samples were reserved for selective inference.

Figure \ref{fig:hiv:2} depicts interval estimates produced by ``Exact", ``MLE" and ``Split".
The set of selected features is depicted on the x-axis.
To allow convenient visualization, the plot does not include the selected feature ‘P184V’, which has a different scale from the other variables in the selected set.
We note that ``Split" selected three features, ``P118I", ``P41L", ``P77L", that were not selected with the randomized LASSO.
Similarly, the randomized LASSO selected the mutation ``P69D" that was not picked by ``Split" at the selection stage.
But, these mutations were not significant after selective inference was conducted with the methods associated with each case.
At the stage of inference, we present confidence intervals for a feature given it was selected in our model.

We observe that the interval estimators for the selected effects produced by the three methods are in close agreement with most features.
Overall, we note that the two randomized methods which reuse data from the selection stage seem to find a larger set of significant associations.
On an average, the length of interval estimators based on ``Exact" is equal to $3.76$.
In agreement with our simulated results, the ``Exact" intervals are longer than the ``MLE" intervals which have an average length of $2.76$. 
This is the price that we pay in exchange of exact selective inference with our pivot.
Our intervals are, however, much shorter than the related ``Split" procedure; the average length of intervals produced by ``Split" in this instance is equal to $6.58$.
Figure \ref{fig:hiv:3} displays box plots for the lengths of the intervals, which exhibit this pattern.

\begin{figure}[H]
\begin{center}
\centerline{\includegraphics[width=18cm, height=9cm]{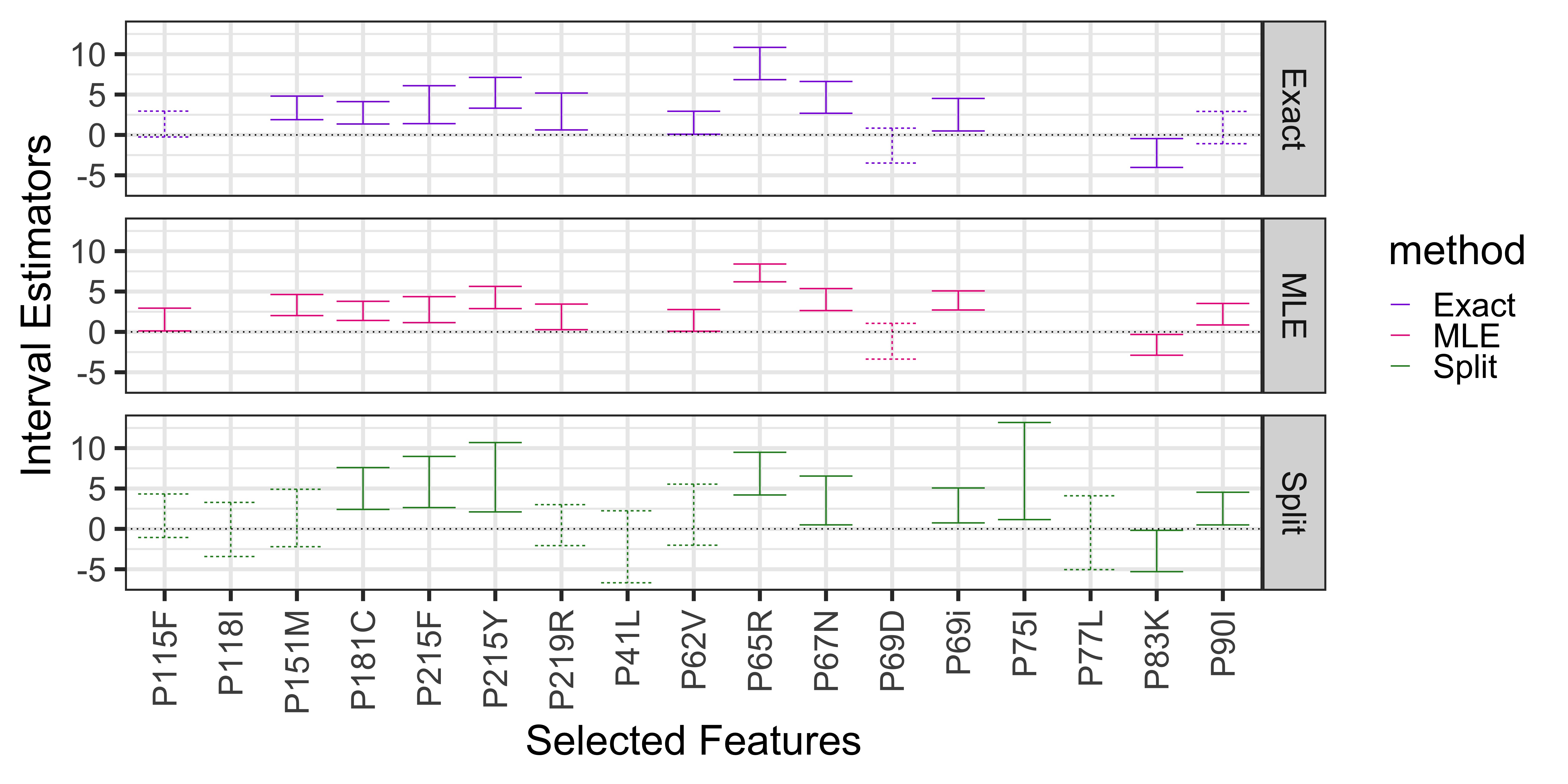}}
\end{center}
\vspace{-1cm}
\caption{Confidence intervals based on the three randomized methods: ``Exact", ``MLE" and ``Split". Solid lines are used for interval estimators that do not cover $0$. Dotted lines are used for interval estimators that cover $0$.}
\label{fig:hiv:2}
\end{figure}

\begin{figure}[H]
\begin{center}
\centerline{\includegraphics[width=1.\linewidth]{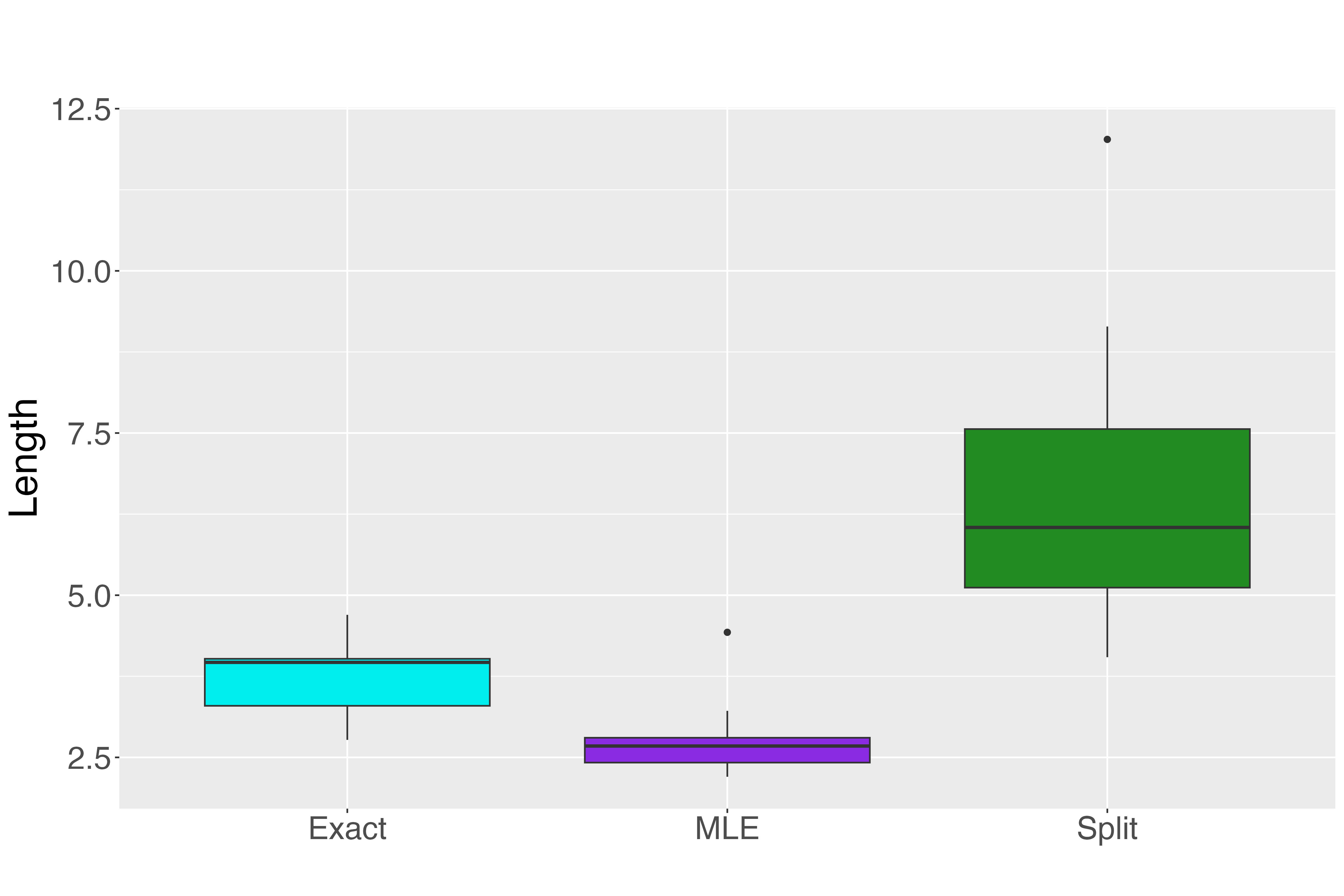}}
\end{center}
\vspace{-1cm}
\caption{Lengths of interval estimators.}
\label{fig:hiv:3}
\end{figure} 

\section{Discussion}
\label{sec:conclusion}

When conducting selective inference, randomizing data at the time of selection and then conditioning on the outcome of selection can significantly decrease the length of confidence intervals. 
However, the pivot used in previous methods is often unavailable in closed form. 
Our paper introduces a new pivot for simple Gaussian randomization schemes that is easy to compute and eliminates the need for any further approximation. 

Although exact selective inference has its benefits, it also comes at a cost. 
By conditioning on additional information to obtain our pivot, we sacrifice some power when compared to approximate techniques developed in prior work, such as \cite{panigrahi2017mcmc, panigrahi2019approximate}.
Our simulated findings for popular Gaussian regression models show that the loss in power with our method is nominal for the well-studied LASSO algorithm. 
In fact, with a carefully chosen randomization scheme, our confidence intervals can be much shorter than those produced by data splitting. 
The gains from reusing data from the selection stages become more pronounced as the number of samples available for inference decreases. 
Therefore, our method can be applied to carry out selective inference when the dataset is not large enough to be split into two parts or when there is no simple way to divide the dataset into independent subsamples.

The focus of this paper has been on exact inferences in the least squares problem. 
However, the pivot generated from the randomization scheme used in the paper can also be applied to more general estimation problems, including the class of M-estimation problems. We believe that the same pivot could be used as long as the selection algorithm allows for a linear representation in optimization variables at the solution. 
In these problems, our pivot would provide asymptotic selective inferences instead of exact selective inferences, which would require a formal theoretical justification and needs to be investigated in future work.

\section{Acknowledgements}
S. Panigrahi's research is supported in part by NSF grants: DMS 1951980 and DMS 2113342.
K. Fry's research is supported by NSF GRFP. 
J. Taylor's research is supported by ARO grant: 70940MA.

\bibliographystyle{apalike}  
\bibliography{references}

\newpage

\section{Appendix}

\subsection{Proofs of technical results}
\begin{proof} Proposition \ref{Lem:cond}.
We begin by writing
$$
O =  (r^{ j})^{\tp} O \frac{\Theta r^{j}}{(r^{j})^{\tp} \Theta r^{j}} + A^{r^j}= (r^{j})^{\tp} O  \cdot Q^{j} + A^{r^j}.
$$
Then, using \eqref{cond:event}, we have
\begin{equation*}
    \begin{aligned}
    &\left\{ G= \cG, A^{r^j}= \cA^{r^j} \right\} \\
    &= \left\{ LO< M, U=\cU, A^{r^j}= \cA^{r^j}\right\}\\
    &= \left\{  (r^{ j})^{\tp} O \cdot LQ^{j} < M - L\cA^{r^j}, U=\cU, A^{r^j}= \cA^{r^j} \right\}\\
    &= \left\{I^j_{-} <  (r^{j})^{\tp} O < I^j_{+}, \; U=\cU, A^{r^j}= \cA^{r^j} \right\}.
    \end{aligned}
\end{equation*}
\end{proof}

Suppose that $\cE \subset [p]$ and $\cS \in \{-1,1\}^{|\cE|}$ are a fixed set and a fixed set of signs, respectively.
Before providing a proof for Theorem \ref{thm:main}, we state a few results on the distribution of our optimization variables given $Y=y$.

\begin{lemma}
Define 
$$\Pi_y(\cO, \cU)=  Py+ Q\cO + R \cU + T.$$
The joint density of $O$, $U$ given $Y=y$,at $(\cO,\cU)$,  is proportional to
$$
\phi\left(\Pi_y\left(\cO, \cU\right); 0_p, \Omega\right),
$$
and the density of $O$ given $U=\cU$ and $Y=y$, at $\cO$, is equal to
$$
\phi\left(\cO; \Delta(y, \cU), \Theta \right).
$$
\label{lem:cond:density}
\end{lemma}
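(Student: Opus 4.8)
The plan is to obtain both densities by a single change of variables, using that, conditional on $Y=y$, the only remaining randomness is $w\sim\mathcal{N}(0_p,\Omega)$, which is independent of $y$. The randomized LASSO stationarity condition in \eqref{PQRT} reads $w=\Pi_y(O,U)$ with $\Pi_y(\cO,\cU)=Py+Q\cO+R\cU+T$. First I would argue that, on the set of randomization values $w$ that produce the observed active set $\cE$ with signs $\cS$ --- equivalently, the region where $(O,U)$ satisfies $LO<M$ together with $U$ being a valid inactive subgradient --- the map $(\cO,\cU)\mapsto\Pi_y(\cO,\cU)$ is a smooth bijection onto that set. This is the standard KKT reparametrization: the active solution together with the inactive subgradient determines $w$, and conversely. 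Hence the change-of-variables formula applies and the joint density of $(O,U)$ given $Y=y$ at $(\cO,\cU)$ equals $\phi(\Pi_y(\cO,\cU);0_p,\Omega)$ times the absolute value of the Jacobian determinant of $w$ with respect to $(\cO,\cU)$.

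Next I would compute that Jacobian. Since $\Pi_y$ is affine in $(\cO,\cU)$ with linear part $[\,Q\mid R\,]$, the Jacobian is the constant $\bigl|\det[\,Q\mid R\,]\bigr|$. From \eqref{PQRT} this $p\times p$ matrix is block lower triangular, with diagonal blocks $X_{\cE}^{\tp}X_{\cE}+\epsilon I_{|\cE|}$ and $\lambda I_{p-|\cE|}$, so its determinant equals $\det(X_{\cE}^{\tp}X_{\cE}+\epsilon I_{|\cE|})\cdot\lambda^{p-|\cE|}$, a positive constant that does not depend on $(\cO,\cU)$ or on $y$. It is therefore absorbed into the proportionality constant, which gives the first assertion.

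For the conditional density of $O$ given $U=\cU$ and $Y=y$, I would fix $\cU$ and $y$ and complete the square in $\cO$ inside $\phi(\Pi_y(\cO,\cU);0_p,\Omega)\propto\exp\bigl(-\tfrac12\,\Pi_y(\cO,\cU)^{\tp}\Omega^{-1}\Pi_y(\cO,\cU)\bigr)$. Writing $\Pi_y(\cO,\cU)=Q\cO+b$ with $b=Py+R\cU+T$ (free of $\cO$), the exponent is a quadratic in $\cO$ with Hessian $Q^{\tp}\Omega^{-1}Q=\Theta^{-1}$ and linear term $2\,\cO^{\tp}Q^{\tp}\Omega^{-1}b$; completing the square shows it equals, up to an additive $\cO$-free term, $-\tfrac12(\cO-m)^{\tp}\Theta^{-1}(\cO-m)$ with $m=-\Theta Q^{\tp}\Omega^{-1}b=-\Theta Q^{\tp}\Omega^{-1}(Py+R\cU+T)=\Delta(y,\cU)$. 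Dividing the joint density by the marginal density of $(U,Y)$ at $(\cU,y)$ removes the leftover $\cO$-free factor and normalizes, leaving exactly $\phi(\cO;\Delta(y,\cU),\Theta)$.

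I expect the only genuine subtlety to be the first step: verifying carefully that the KKT map $(\cO,\cU)\mapsto w$ is a bijection onto the randomization region corresponding to $\{E=\cE,\,S=\cS\}$ (so that the change-of-variables formula is legitimate), and that this region is precisely $\{LO<M\}$ intersected with valid inactive subgradients. Once this is in place, the rest is bookkeeping with Gaussian densities and the explicit block structure of $P$, $Q$, $R$, $T$.
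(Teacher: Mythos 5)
Your proposal is correct and follows essentially the same route as the paper: condition on $Y=y$, change variables from $w\sim\mathcal{N}(0_p,\Omega)$ to $(O,U)$ via the affine KKT map $\Pi_y$ with constant Jacobian $\bigl|\det\begin{bmatrix} Q & R\end{bmatrix}\bigr|$, and then read off the conditional of $O$ given $U=\cU$ as a Gaussian with precision $Q^{\tp}\Omega^{-1}Q=\Theta^{-1}$ and mean $\Delta(y,\cU)$. Your explicit completion of the square and the block-triangular evaluation of the Jacobian are just more detailed versions of the steps the paper states directly.
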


\begin{proof} Lemma \ref{lem:cond:density}.
Note that the density of our randomization variable $W$ given $Y=y$ is equal to
$$
\phi(w; 0, \Omega)
$$
at $w$.
To derive the density for the optimization variables, we use the following change of variables 
$$
W \to (O,U), \text{ where } (O,U) = \Pi_y^{-1}(W).
$$
Then, the density of the new variables $O$ and $U$, at $(\cO,\cU)$, is given by
$$
J\cdot \phi(\Pi_y(\cO,\cU); 0_p, \Omega), 
$$
where
$$J= \Big|\text{det}\begin{bmatrix} Q & R \end{bmatrix}\Big|$$
is the Jacobian associated with the change of variables through $\Pi_y$. This proves the first part of our claim.

Next, we observe that the conditional density of $O$ given $U=\cU$ and $Y=y$, at $\cO$, is equal to
$$
\dfrac{J\cdot \phi(\Pi_y(\cO,\cU); 0_p, \Omega)}{\int J\cdot \phi(\Pi_y(o,\cU); 0_p, \Omega) do}= \phi(\Delta(y, \cU), \Theta).
$$
\end{proof}

\begin{lemma}
The two variables $(r^{j})^{\tp} O$ and $A^{r^j}$ are independent given $Y=y$, $U=\cU$.
\label{lem:indep}
\end{lemma}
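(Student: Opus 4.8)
\textbf{Proof proposal for Lemma \ref{lem:indep}.}
The plan is to reduce the claim to a covariance computation between two linear images of a Gaussian vector. By Lemma \ref{lem:cond:density}, conditional on $Y=y$ and $U=\cU$, the active solution $O$ is Gaussian with mean $\Delta(y,\cU)$ and covariance $\Theta$. Both $(r^{j})^{\tp}O$ and $A^{r^j}=\left(I_{|\cE|}- Q^j(r^{j})^{\tp}\right)O$ are (affine-)linear functions of $O$, so the stacked vector $\big((r^{j})^{\tp}O,\ A^{r^j}\big)$ is jointly Gaussian given $Y=y$, $U=\cU$: any linear combination of its coordinates is a linear functional of the Gaussian vector $O$, hence univariate Gaussian. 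For jointly Gaussian variables, independence is equivalent to vanishing cross-covariance, so it suffices to show
$$
\operatorname{Cov}\!\left((r^{j})^{\tp}O,\ A^{r^j}\ \middle|\ Y=y,\ U=\cU\right)=0_{|\cE|}.
$$

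First I would expand this cross-covariance using the conditional covariance $\Theta$ of $O$:
$$
\operatorname{Cov}\!\left((r^{j})^{\tp}O,\ \left(I_{|\cE|}- Q^j(r^{j})^{\tp}\right)O\right)
= (r^{j})^{\tp}\Theta\left(I_{|\cE|}- r^{j}(Q^j)^{\tp}\right)
= (r^{j})^{\tp}\Theta - (r^{j})^{\tp}\Theta\, r^{j}\,(Q^j)^{\tp}.
$$
Then I would substitute the definition $Q^j=\frac{1}{(r^{j})^{\tp}\Theta r^{j}}\,\Theta r^{j}$ and use the symmetry of $\Theta$, so that $(Q^j)^{\tp}=\frac{1}{(r^{j})^{\tp}\Theta r^{j}}\,(r^{j})^{\tp}\Theta$. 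This gives $(r^{j})^{\tp}\Theta\, r^{j}\,(Q^j)^{\tp} = (r^{j})^{\tp}\Theta$, and hence the cross-covariance is $(r^{j})^{\tp}\Theta-(r^{j})^{\tp}\Theta = 0_{|\cE|}$, which completes the argument.

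There is essentially no hard step here; the only point worth stating carefully is the justification of joint Gaussianity of the pair (so that zero covariance actually yields independence, not merely uncorrelatedness), and the observation that conditioning on $\h\Gamma^j$ is not needed for this particular statement since $A^{r^j}$ and $(r^j)^\tp O$ are both functions of $O$ alone. I would also remark that $Q^j$ is well-defined because $(r^{j})^{\tp}\Theta r^{j}=(\vartheta^j)^2>0$, as $\Theta=(Q^{\tp}\Omega^{-1}Q)^{-1}$ is positive definite and $r^j\neq 0$; this is implicitly used when dividing by $(r^{j})^{\tp}\Theta r^{j}$.
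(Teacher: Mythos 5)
Your proposal is correct and follows essentially the same route as the paper: conditional on $Y=y$, $U=\cU$, the vector $O$ is Gaussian with covariance $\Theta$ (Lemma \ref{lem:cond:density}), both quantities are linear in $O$, and the cross-covariance $(r^{j})^{\tp}\Theta\bigl(I_{|\cE|}-r^{j}(Q^j)^{\tp}\bigr)$ vanishes by the definition of $Q^j$, giving independence via joint Gaussianity. The only difference is that you spell out the covariance computation and the justification of joint Gaussianity, which the paper leaves implicit.
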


\begin{proof}
This claim follows directly by using the fact that the covariance of $O\lvert Y=y, U=\cU$ is $\Theta$, as derived in Lemma \ref{lem:cond:density}.
Now, we observe
$$\text{Cov}(A^{r^j}, (r^{j})^{\tp} O \lvert Y=y, U=\cU)=0_{|\cE|}.$$
\end{proof}

Now, we are ready to derive the pivot in Theorem \ref{thm:main}.
In our proof, we use the symbols $\ell_V(v)$ and $\ell_{V|X}(v| x)$ for the density of a variable $V$ and the conditional density of a variable $V$ given $X=x$, at $v$, respectively.
In particular, if the density functions involve our parameter of interest, $\beta^{\cE}_j$, we indicate this through the symbols $\ell_{V;\beta^{\cE}_j}(v)$ and $\ell_{V|X;\beta^{\cE}_j}(v| x)$.

\begin{proof} 
Theorem \ref{thm:main}. \ We divide our proof into three steps.

In Step 1, we start from the marginal density of 
$$(\h{\beta}_j^{\cE}, \h{\Gamma}^j, (r^{j})^{\tp} O, A^{r^j},U)$$ for the fixed set $\cE$ and fixed signs $\cS$.
This marginal density, at $(b, g, \rZ, \cA^{r^j}, \cU)$, is equal to
\begin{equation*}
    \begin{aligned}
& \ell_{\h{\beta}_j^{\cE};\beta_j^{\cE}}(b)\cdot \ell_{\h{\Gamma}^j}(g) \cdot \ell_{(r^{j})^{\tp} O, A^{r^j},U\lvert \h{\beta}_j^{\cE},\h{\Gamma}^j}( \rZ,\cA^{r^j}, \cU \lvert b, g)
 \end{aligned}
\end{equation*}
which we obtain by using the fact that
$$ 
y = \frac{c^{j}}{\|c^{j}\|^2_2} \h{\beta}_j^{\cE} + \h{\Gamma}^j= V^j\h{\beta}_j^{\cE} + \h{\Gamma}^j,
$$
and that $\h{\beta}_j^{\cE}$ and $\h{\Gamma}^j$ are independent variables.
The above-stated density simplifies as
\begin{equation*}
    \begin{aligned}
&\ell_{\h{\beta}_j^{\cE};\beta_j^{\cE}}(b)\cdot \ell_{\h{\Gamma}^j}(g) \cdot \ell_{(r^{j})^{\tp} O, A^{r^j}\lvert \h{\beta}_j^{\cE},\h{\Gamma}^j, \cU}( \rZ, \cA^{r^j} \lvert b, g, \cU)\cdot \ell_{U\lvert \h{\beta}_j^{\cE},\h{\Gamma}^j}( \cU \lvert b, g)\\
&=\ell_{\h{\beta}_j^{\cE};\beta_j^{\cE}}(b)\cdot \ell_{\h{\Gamma}^j}(g) \cdot \ell_{(r^{j})^{\tp} O\lvert \h{\beta}_j^{\cE},\h{\Gamma}^j, \cU}(  \rZ\lvert b, g, \cU) \cdot \ell_{A^{r^j}\lvert \h{\beta}_j^{\cE},\h{\Gamma}^j, \cU}( \cA^{r^j} \lvert b, g, \cU)\cdot \ell_{U\lvert \h{\beta}_j^{\cE},\h{\Gamma}^j}( \cU \lvert b, g),
 \end{aligned}
\end{equation*}
The expression on the right-hand side follows by using the conditional independence between $(r^{j})^{\tp} O$ and $A^{r^j}$ which was shown in Lemma \ref{lem:indep}.

In Step 2, we derive the density of $\h{\beta}_j^{\cE}$ and $(r^j)^{\tp}O$ when conditioned on the event
$$\left\{ G= \cG, A^{r^j}= \cA^{r^j} \right\}$$
and the value of $\h{\Gamma}^j$.
Let $C( \h{\Gamma}^j, \cU, A^{r^j})$ be equal to
$$\int \ell_{\h{\beta}_j^{\cE};\beta_j^{\cE}}(\tilde{b}) \cdot \ell_{(r^{j})^{\tp} O\lvert \h{\beta}_j^{\cE},\h{\Gamma}^j, \cU}( \tilde{\rZ}\lvert \tilde{b}, \h{\Gamma}^j, \cU) \cdot \ell_{A^{r^j}\lvert \h{\beta}_j^{\cE},\h{\Gamma}^j, \cU}( \cA^{r^j} \lvert \tilde{b}, \h{\Gamma}^j, \cU)\cdot \ell_{U\lvert \h{\beta}_j^{\cE},\h{\Gamma}^j}( \cU \lvert \tilde{b}, \h{\Gamma}^j) \cdot 1_{[I^j_{-}, I^j_{+}]}(\tilde{\rZ})d\tilde{\rZ} d\tilde{b}.$$
Because of the characterization for our conditioning event in Proposition \ref{Lem:cond}, the conditional density of $(\h{\beta}_j^{\cE}, r^j)^{\tp}O)$ at $(b, \rZ)$ is equal to
\begin{equation*}
    \begin{aligned}
    &(C( \h{\Gamma}^j, \cU, A^{r^j}))^{-1}\cdot \ell_{\h{\beta}_j^{\cE};\beta_j^{\cE}}(b)\cdot \ell_{(r^{j})^{\tp} O\lvert \h{\beta}_j^{\cE},\h{\Gamma}^j, \cU}(  \rZ\lvert b, \h{\Gamma}^j, \cU) \cdot \ell_{A^{r^j}\lvert \h{\beta}_j^{\cE},\h{\Gamma}^j, \cU}( \cA^{r^j} \lvert b, \h{\Gamma}^j, \cU)\\
    &\;\;\;\;\;\;\;\;\;\;\;\;\;\;\;\;\;\;\;\;\;\;\;\;\;\;\;\;\;\;\;\;\;\;\;\;\;\;\;\;\;\;\;\;\;\;\;\;\;\;\;\;\;\;\;\;\;\;\;\;\;\;\;\;\;\;\;\;\;\;\;\;\;\;\;\;\;\;\;\;\;\;\;\;\;\;\;\;\;\;\;\;\;\times \ell_{U\lvert \h{\beta}_j^{\cE},\h{\Gamma}^j}( \cU \lvert b, \h{\Gamma}^j) \cdot 1_{[I^j_{-}, I^j_{+}]}(\rZ).
    \end{aligned}
\end{equation*}

In Step 3, we simplify the conditional density from the preceding step.
Observe that
\begin{equation}
    \begin{aligned}
    & \ell_{A^{r^j}\lvert \h{\beta}_j^{\cE},\h{\Gamma}^j, \cU}( \cA^{r^j} \lvert b, \h{\Gamma}^j, \cU) =   \ell_{A^{r^j}\lvert \h{\Gamma}^j, \cU}( \cA^{r^j} \lvert \h{\Gamma}^j, \cU).
    \end{aligned}
        \label{simplify:1}
\end{equation}
This is because the conditional Gaussian distribution of $A^{r^j}$ on the left-hand side display depends on $\h{\beta}_j^{\cE}$ only through its mean, which is equal to
\begin{equation*}
    \begin{aligned}
\left(  Q^j(r^{j})^{\tp}-I \right)\Theta Q^{\tp}\Omega^{-1} (P^{j}b + P\h{\Gamma}^j + R\cU + T)&= \left( I-  Q^j(r^{j})^{\tp} \right)\Delta(\h{\Gamma}^j , \cU).
    \end{aligned}
\end{equation*}
Note that the expression on the right-hand side is free of $b$.
Thus, we can further write the conditional density of $\h{\beta}_j^{\cE}$ and $(r^j)^{\tp}O$ as 
\begin{equation*}
    \begin{aligned}
    &\dfrac{\ell_{\h{\beta}_j^{\cE};\beta_j^{\cE}}(b)\cdot \ell_{(r^{j})^{\tp} O\lvert \h{\beta}_j^{\cE},\h{\Gamma}^j, \cU}(  \rZ\lvert b, \h{\Gamma}^j, \cU) \cdot \ell_{U\lvert \h{\beta}_j^{\cE},\h{\Gamma}^j}( \cU \lvert b, \h{\Gamma}^j)}{\int \ell_{\h{\beta}_j^{\cE};\beta_j^{\cE}}(\tilde{b}) \cdot \ell_{(r^{j})^{\tp} O\lvert \h{\beta}_j^{\cE},\h{\Gamma}^j, \cU}( \tilde{\rZ}\lvert \tilde{b}, \h{\Gamma}^j, \cU) \cdot \ell_{U\lvert \h{\beta}_j^{\cE},\h{\Gamma}^j}( \cU \lvert \tilde{b}, \h{\Gamma}^j) \cdot 1_{[I^j_{-}, I^j_{+}]}(\tilde{\rZ})d\tilde{\rZ} d\tilde{b} } \cdot 1_{[I^j_{-}, I^j_{+}]}(\rZ)\\
    &=  \dfrac{ \phi(b; \lambda^j\beta_j^{\cE}+ \zeta^j, (\sigma^j)^2) \cdot \phi(\rZ; \theta^j(b),(\vartheta^j)^2)}{\int \phi(\tilde{b}; \lambda^j\beta_j^{\cE}+ \zeta^j, (\sigma^j)^2) \cdot \phi(\tilde{\rZ}; \theta^j(\tilde{b}),(\vartheta^j)^2) \cdot 1_{[I^j_{-}, I^j_{+}]}(\tilde{\rZ})d\tilde{\rZ} d\tilde{b}}\cdot 1_{[I^j_{-}, I^j_{+}]}(\rZ),
    \end{aligned}
\end{equation*}
because
\begin{equation*}
    \begin{aligned}
    & \ell_{\h{\beta}_j^{\cE};\beta_j^{\cE}}(b) \cdot \ell_{(r^{j})^{\tp} O\lvert \h{\beta}_j^{\cE},\h{\Gamma}^j, \cU}(  \rZ\lvert b, \h{\Gamma}^j, \cU) \cdot \ell_{U\lvert \h{\beta}_j^{\cE},\h{\Gamma}^j}( \cU \lvert b, \h{\Gamma}^j)   \propto \phi(b; \lambda^j\beta_j^{\cE}+ \zeta^j, (\sigma^j)^2)\cdot \phi(\rZ; \theta^j(b),(\vartheta^j)^2).
    \end{aligned}
\end{equation*}

Marginalizing over $(r^j)^{\tp}O$ yields us the following conditional density
\begin{equation*}
    \begin{aligned}
    \dfrac{\phi\left(b; \lambda^j\beta^{\cE}_j+ \zeta^j, (\sigma^j)^2\right)\cdot \Phi\left(\frac{1}{\vartheta^j}(I^j_{+}- \theta^j(b))\right)- \Phi\left(\frac{1}{\vartheta^j}(I^j_{-}- \theta^j(b))\right)}{\int \phi\left(\tilde{b}; \lambda^j\beta^{\cE}_j+ \zeta^j, (\sigma^j)^2\right)\cdot  \Phi\left(\frac{1}{\vartheta^j}(I^j_{+}- \theta^j(\tilde{b}))\right)- \Phi\left(\frac{1}{\vartheta^j}(I^j_{-}- \theta^j(\tilde{b}))\right) d\tilde{b} }.
    \end{aligned}
\end{equation*}
A probability integral transform of the related conditional distribution gives us $\mathcal{P}^j_{\text{\normalfont Exact}} (\beta^{\cE}_j)$, a uniformly distributed variable on $[0,1]$.
\end{proof}

\begin{proof}
Corollary \ref{exact:pivot:carving}.
The proof of this corollary follows by noting that 
\begin{align*}
\begin{gathered}
\lambda^j=1, \zeta^j=0, \\
(\sigma^j)^2 = \sigma^2 \|c^j\|_2^2,
\end{gathered}
\end{align*}
when $\Omega$ is set according to \eqref{randomization:carving}.
Additionally, we have
$$r^j = -\frac{1}{\tau^2\|c^j\|_2^2}e_j, \text{ and } \  \Theta Q^{\tp}\Omega^{-1}= \begin{bmatrix}  (X_{\cE}^{\tp} X_{\cE})^{-1} & 0_{|\cE|, p-|\cE|}\end{bmatrix},$$
which leads us to the claimed values for  $\vartheta^j$ and $\theta^j(x)$.
\end{proof}

\begin{proof}
Proposition \ref{choice:cond}. 
We continue with the notations in the proof of Theorem \ref{thm:main}.
For fixed set $\cE\subset [p]$ and fixed signs $\cS\in \{-1,1\}^{|\cE|}$, we establish the stronger assertion that
\begin{equation}
\ell_{\h{\beta}_j^{\cE} \lvert U, A^{r^j}, \h{\Gamma}^j; \beta^{\cE}_j}(b \lvert \cU, \cA^{r^j}, g)=\ell_{\h{\beta}_j^{\cE} \lvert U, \h{\Gamma}^j; \beta^{\cE}_j}(b \lvert \cU, g).
\label{stronger:statement}
\end{equation}
Starting with the distribution related to the conditional density on the left-hand side of \eqref{stronger:statement}, we have
\begin{equation*}
    \begin{aligned}
    &\ell_{\h{\beta}_j^{\cE} \lvert U, A^{r^j}, \h{\Gamma}^j; \beta^{\cE}_j}(b \lvert \cU, \cA^{r^j}, g)\\
    &=\dfrac{\ell_{\h{\beta}_j^{\cE};\beta_j^{\cE}}(b)\cdot \ell_{A^{r^j}\lvert \h{\beta}_j^{\cE},\h{\Gamma}^j, \cU}( \cA^{r^j} \lvert b, g, \cU)\cdot \ell_{U\lvert \h{\beta}_j^{\cE},\h{\Gamma}^j}( \cU \lvert b, g)\cdot \bigintsss\ell_{(r^{j})^{\tp} O\lvert \h{\beta}_j^{\cE},\h{\Gamma}^j, \cU}(  \tilde{\rZ}\lvert b, g, \cU) d\tilde{\rZ} }{\bigintsss \ell_{\h{\beta}_j^{\cE};\beta_j^{\cE}}(\tilde{b}) \cdot \ell_{A^{r^j}\lvert \h{\beta}_j^{\cE},\h{\Gamma}^j, \cU}( \cA^{r^j} \lvert \tilde{b}, g, \cU)\cdot \ell_{U\lvert \h{\beta}_j^{\cE},\h{\Gamma}^j}( \cU \lvert \tilde{b}, g)  \cdot \ell_{(r^{j})^{\tp} O\lvert \h{\beta}_j^{\cE},\h{\Gamma}^j, \cU}( \tilde{\rZ}\lvert \tilde{b}, g, \cU)d\tilde{\rZ} d\tilde{b} } \\
    &= \dfrac{\ell_{\h{\beta}_j^{\cE};\beta_j^{\cE}}(b)\cdot \ell_{A^{r^j}\lvert \h{\Gamma}^j, \cU}( \cA^{r^j} \lvert g, \cU)\cdot \ell_{U\lvert \h{\beta}_j^{\cE},\h{\Gamma}^j}( \cU \lvert b, g)\cdot \bigintsss\ell_{(r^{j})^{\tp} O\lvert \h{\beta}_j^{\cE},\h{\Gamma}^j, \cU}(  \tilde{\rZ}\lvert b, g, \cU) d\tilde{\rZ} }{\bigintsss \ell_{\h{\beta}_j^{\cE};\beta_j^{\cE}}(\tilde{b}) \cdot \ell_{A^{r^j}\lvert \h{\Gamma}^j, \cU}( \cA^{r^j} \lvert g, \cU)\cdot \ell_{U\lvert \h{\beta}_j^{\cE},\h{\Gamma}^j}( \cU \lvert \tilde{b}, g)  \cdot \ell_{(r^{j})^{\tp} O\lvert \h{\beta}_j^{\cE},\h{\Gamma}^j, \cU}( \tilde{\rZ}\lvert \tilde{b}, g, \cU)d\tilde{\rZ} d\tilde{b} } \\  
    &= \dfrac{\ell_{\h{\beta}_j^{\cE};\beta_j^{\cE}}(b)\cdot \ell_{U\lvert \h{\beta}_j^{\cE},\h{\Gamma}^j}( \cU \lvert b, g)\cdot \bigintsss\ell_{(r^{j})^{\tp} O\lvert \h{\beta}_j^{\cE},\h{\Gamma}^j, \cU}(  \tilde{\rZ}\lvert b, g, \cU) d\tilde{\rZ} }{\bigintsss \ell_{\h{\beta}_j^{\cE};\beta_j^{\cE}}(\tilde{b})\cdot \ell_{U\lvert \h{\beta}_j^{\cE},\h{\Gamma}^j}( \cU \lvert \tilde{b}, g)  \cdot \ell_{(r^{j})^{\tp} O\lvert \h{\beta}_j^{\cE},\h{\Gamma}^j, \cU}( \tilde{\rZ}\lvert \tilde{b}, g, \cU)d\tilde{\rZ} d\tilde{b} }.
 \end{aligned}
\end{equation*}
Note, to derive the above expression, we used the fact in \eqref{simplify:1}.
Clearly, the expression for the conditional density does not depend on $\cA^{r^j}$.
Hence, we have proved our assertion and conclude that
\begin{equation*}
    \begin{aligned}
\text{\normalfont Var}\left(\h{\beta}_j^{\cE}\ \Big\lvert \ U= \cU, A^{r^j}= \cA^{r^j}, \h{\Gamma}^j= g\right) &= \text{\normalfont Var}\left(\h{\beta}_j^{\cE} \ \Big\lvert \ U= \cU, \h{\Gamma}^j= g\right)\\
&= \underset{\eta}{\text{maximize}} \ \text{\normalfont Var}\left(\h{\beta}_j^{\cE}\ \Big\lvert \ U= \cU, A^{\eta}= \cA^{\eta}, \h{\Gamma}^j=g\right).
\end{aligned}
\end{equation*}
\end{proof}

\subsection{Connection with data splitting}
\label{sec:split}

We provide additional details to connect the randomized LASSO with data splitting.
Denote by $S\subset [n]$ a random subsample of size $n_1$.
Let $X^{(S)}$ and $y^{(S)}$ denote the feature matrix and the response vector which contain the observations in this subsample $S$.
In this discussion, we emphasize the dependence on the sample size, denoted by $n$, whenever it is relevant.

Fix $\lambda \in \mathbb{R}^+$. 
Let 
$$\rho= \frac{n_1}{n}.$$
Akin to data splitting, suppose we solve the LASSO using the data $(y^{(S)}, X^{(S)})$ as
\begin{equation}
   \underset{b \in \real^p}{\text{minimize}}\ \frac{1}{2\sqrt{n_1}}\|y^{(S)}-X^{(S)}b\|_2^2 + \lambda^{(S)} \|b\|_1,
\label{split:LASSO:eg}   
\end{equation}
where 
$$ \lambda^{(S)}= \sqrt{\rho} \lambda.$$ 

At the solution of \eqref{split:LASSO:eg}, which we denote by $\widehat{b}^{(S)}$, observe that
\begin{equation}
\label{rand:asymptotic}
0_p = \lambda \begin{pmatrix} \cS \\ \cU \end{pmatrix} - \frac{1}{\rho\sqrt{n}} (X^{(S)})^\tp (y - X^{(S)} \widehat{b}^{(S)}).
\end{equation}
Here the first term on the right-hand side of the display is the subgradient of the LASSO penalty at the solution.
Let $E=\cE$ denote the selected set of features.

Define
\begin{equation*}
\begin{aligned}
w &= \frac{1}{\rho\sqrt{n}} (X^{(S)})^\tp (y - X^{(S)} \widehat{b}^{(S)})  - \frac{1}{\sqrt{n}} X^\tp (y - X \widehat{b}^{(S)})\\
&= \sqrt{n}\left\{\frac{1}{n_1} (X^{(S)})^\tp (y - X^{(S)} \widehat{b}^{(S)})  - \frac{1}{n} X^\tp (y - X \widehat{b}^{(S)}) \right\}
\end{aligned}
\end{equation*}
as our randomization variable.
Note that the equality in \eqref{rand:asymptotic} can be rewritten as 
\begin{equation*}
\begin{aligned}
 \frac{1}{\rho\sqrt{n}} (X^{(S)})^\tp (y - X^{(S)} \widehat{b}^{(S)})  - \frac{1}{\sqrt{n}} X^\tp (y - X \widehat{b}^{(S)})   =   \lambda \begin{pmatrix} \cS \\ \cU \end{pmatrix} - \frac{1}{\sqrt{n}} X^\tp (y - X \widehat{b}^{(S)}),
\end{aligned}
\end{equation*}
which is equivalent to
$$w  =\lambda \begin{pmatrix} \cS \\ \cU \end{pmatrix} - \frac{1}{\sqrt{n}} X^\tp (y - X_{\cE} \cO).$$
As shown in Section \ref{sec3.1}, \eqref{rand:asymptotic} can be expressed as a linear mapping in the optimization variables $O$ and $U$. 
Moreover, $w_n$ can be proven to asymptotically follow a Gaussian distribution with the variance matrix specified in \eqref{randomization:carving}. 
To see a formal derivation of the asymptotic distribution of $w_n$, we refer readers to \cite{selective_bayesian}.
\end{document}